\documentclass[12pt]{article}
\usepackage[utf8]{inputenc}
\usepackage[english]{babel} 
\usepackage{amsthm}
\usepackage{amsmath} 
\usepackage{pgfplots} 
\usepackage{tikz}
\usepackage{sgame} 
\pgfplotsset{compat=1.7}
\usetikzlibrary{intersections, calc}
\usetikzlibrary{patterns, fillbetween}
\usepackage{natbib}
\usepackage{amsfonts}
\usepackage{bbm}
\usepackage{comment} 
\usepackage{url}
\usepackage[doublespacing]{setspace}
\usepackage{lipsum}
\usepackage{amssymb}
\usepackage[normalem]{ulem}
\usepackage{subfigure}
\usepackage{array}
\usepackage{hyperref}
\usepackage[title,titletoc]{appendix}

\usepackage{todonotes}

\usetikzlibrary{scopes,patterns,intersections,calc}

\newtheorem{theorem}{Theorem}

\newtheorem{corollary}{Corollary}

\newtheorem{definition}{Definition}
\newtheorem{example}{Example}

\newtheorem{lemma}{Lemma}

\newtheorem{proposition}{Proposition}

\newcommand{\argmax}{{\rm argmax}}

\newcommand{\dN}{{\bf N}}

\newcommand{\dR}{{\bf R}}

\newcommand{\calZ}{{\cal Z}}
\newcommand{\ep}{\varepsilon}

\def\AM{A}
\def\PM{P}
\def\UM{u}

\linespread{1.5}
\setlength{\marginparsep}{-0.0in} 
\setlength{\oddsidemargin}{.0in}
\setlength{\topmargin}{-0.7in}
\setlength{\textheight}{8.7in}
\setlength{\textwidth}{6.7in}

\title{Mechanism Design with Spiteful Agents\thanks{Lagziel acknowledges the support of the Israel Science Foundation, Grant \#2074/23.
Solan acknowledges the support of the Israel Science Foundation, Grant \#211/22.}}

\author{
        Aditya Aradhye\thanks{{Economics Department, Ashoka University, India. E-mail: \textsf{adityaaradhye@gmail.com}.}}
        \and
        David Lagziel\thanks{Department of Economics, Ben-Gurion University of the Negev, Israel.  E-mail: \textsf{Davidlag@bgu.ac.il}. }
        \and
        Eilon Solan\thanks{The School of Mathematical Sciences, Tel Aviv University, Israel. E-mail: \textsf{eilons@tauex.tau.ac.il}.}}
\date{\today}

\begin{document}

\maketitle

\thispagestyle{empty}

\begin{abstract} \singlespacing{
We study a mechanism-design problem in which spiteful agents strive to not only maximize their rewards but also, contingent upon their own payoff levels, seek to lower the opponents' rewards.
We characterize all individually rational (IR) and incentive-compatible (IC) mechanisms that are immune to such spiteful behavior, showing that they take the form of threshold mechanisms with an ordering of the agents.  
Building on this characterization, we prove two impossibility results: under either anonymity or efficiency, any such IR and IC mechanism collapses to the null mechanism, which never allocates the item to any agent.
Leveraging these findings, we partially extend our analysis to a multi-item setup.
These results illuminate the challenges of auctioning items in the natural presence of other-regarding preferences.}
\end{abstract}

\bigskip

\noindent JEL Classification: D44; D71; D72; D82.

\bigskip

\noindent Keywords: Spiteful agents; mechanism design.

\newpage

\setcounter{page}{1} 
\section{Introduction}

On August 12, 2020, the Israeli Ministry of Communications announced the surprising results of the 5G spectrum auction held earlier that month, in which three groups participated -- Cellcom, Pelephone, and Partner. 
Although all telecommunications groups secured bandwidth bundles enabling 5G operations, the Cellcom group was required to pay $30\%$ more than the Pelephone group despite receiving an inferior bundle.\footnote{The Cellcom group also paid $80\%$ more than the Partner group, though the latter did not secure a superior bundle.}
That evening, the CEO of the Cellcom group tweeted an explanation for the seemingly poor outcome, stating: ``We chose not to hurt the others, and they chose to hurt us; it's as simple as that!"
Interestingly, the Cellcom group challenged the 5G spectrum auction in real time and appealed to the Administrative Court, claiming that the auction design allowed for manipulation and spiteful bidding. 
The appeal was ultimately rejected.  

The phenomenon of spiteful bidding is not unique to the Israeli 2020 spectrum auction.
In the Swiss 2012 spectrum auction, Sunrise paid \(34\%\) more than Swisscom for an inferior bundle.  
Similarly, in the Austrian 2013 spectrum auction, revenues were much higher than expected due to highly aggressive bidding in the sealed-bid stage. This led Georg Serentschy, the Managing Director of the Telecommunications and Postal Services Division of the Austrian Regulatory Authority for Broadcasting and Telecommunications, to remark:  
``In the opinion of the regulatory authority, the price of EUR 2 billion, which was surprisingly high for us, is to be attributed to the consistently offensive strategy followed by the bidders.''\footnote{Earlier examples of potential spiteful bidding in spectrum auctions can be found in the studies of \cite{Borgers2005, Brandt2007, Maasland2007}.}
These auctions and public comments mark the starting point of our study: a mechanism design problem with spiteful agents.

This study builds upon the notion of spiteful agents---those who not only aim to maximize their rewards but also, given their own payoff levels, seek to minimize the rewards of their opponents. 
Such other-regarding preferences are quite natural, even expected, in scenarios where the competitive interaction among agents extends beyond a single auction. 
A prime example is spectrum auctions, where participants also compete in the telecommunications market. 
However, this dynamic is not limited to telecommunications; it is equally applicable to construction projects, retail businesses, the food industry, and any other economic sector where agents compete for advantage in external markets.

The main analysis considers a single-item, private-value auction where the agents strive to maximize their own payoffs, and \emph{conditional on their own rewards}, exhibit a spitefulness property. 
We refer to this property as \emph{spitefulness}, meaning that each agent prefers to reduce the payoffs of some other agents provided that no agent is made better off.
We embed this property into the solution concept of a \emph{Spite-Free Nash Equilibrium} (SNE), in which no player can increase her own payoff through a unilateral deviation, nor can they reduce the payoffs of others while keeping her own payoff fixed.
Within this framework, we aim to characterize mechanisms that are Incentive Compatible (IC) and Individually Rational (IR) while admitting an SNE.

Our first and main result provides a characterization that links the aforementioned properties to threshold mechanisms.  
Formally, a threshold mechanism consists of an ordering of the agents and an individual threshold value assigned to each agent.
According to their predetermined ordering, the agents are sequentially offered take-it-or-leave-it offers to purchase the item at a price which is equal to their threshold.
The first agent to accept the offer receives the item; if no agent  accepts, the item remains unallocated.%
\footnote{To be precise, if no bid is above the respective threshold, then the item is allocated according to some tie-breaking rule among the agents whose bid coincides with their threshold, or is not allocated.
If all bids are below the respective thresholds, the item is not allocated.}
We show that a mechanism is IR and IC with an SNE if and only if it is a threshold mechanism.

It is straightforward to see why threshold mechanisms satisfy the stated properties. 
Since payments are predetermined and offers are made sequentially, agents have no ability to bid spitefully in a way that reduces others’ expected payoffs without also decreasing their own. 
Figure \ref{fig:Thresholdexample} provides some visual intuition for this result by depicting a threshold mechanism in a two-agent setting. 
By contrast, the second-price mechanism does not admit a truthful SNE
as non-winning agents can spitefully raise their bids to reduce the expected payoff of the winning agent (as was shown in \cite{Brandt2002} and \cite{Morgan2003}, among others). 
The more challenging part, however, is to show that \emph{no other mechanism} satisfies IR, IC, and has an SNE.
\color{black}

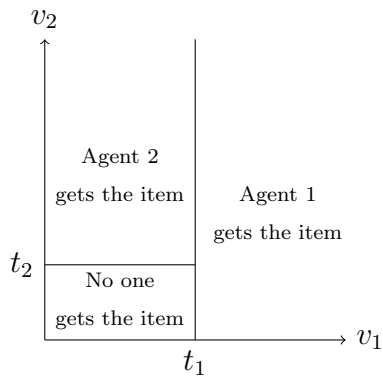
\begin{figure}[h!tbp]
    \centering
    \begin{tikzpicture}[cross/.style={path picture={\draw[black]
(path picture bounding box.south east) -- (path picture bounding box.north west) (path picture bounding box.south west) -- (path picture bounding box.north east);}}]
\tikzstyle{dot}=[circle,draw,inner sep=1.5,fill=black]
\draw[->] (0,0) -- (4, 0) node[right] {$v_1$};
\draw[->] (0,0) -- (0, 4) node[above] {$v_2$};
\draw[-] (2, 4) -- (2,0) node[below] {$t_1$};
\draw[-] (2, 1) -- (0,1) node[left] {$t_2$};
\node[label={[align=center, xshift=1.1cm, yshift=-1cm]\scriptsize Agent 1\\ \scriptsize gets the item}] at (2, 2) {};
\node[label={[align=center, yshift=0.5cm]\scriptsize Agent 2\\ \scriptsize gets the item}] at (1, 1) {};
\node[label={[align=center, yshift=-0.6cm]\scriptsize No one\\ \scriptsize gets the item}] at (1, 0.45) {};
\end{tikzpicture}
\caption{\footnotesize A two-agent threshold mechanism represented in the valuation plane. The thresholds of the two agents are $t_1$ and $t_2$, respectively.
If agent \(1\) bids above her threshold \(t_1\), she receives the item and pays \(t_1\). Otherwise, the item is allocated to agent \(2\), provided that her bid exceeds \(t_2\), in which case she pays \(t_2\).}
\label{fig:Thresholdexample}
\end{figure}

We use the characterization to establish two impossibility results, both of which concern the \emph{null mechanism}, under which the item is never allocated and no agent pays anything, regardless of the agents’ bids. 
The first result shows that any IC, IR, and anonymous (i.e., symmetric) mechanism that admits an SNE must be the null mechanism. 
The second result shows that any IC, IR, and efficient mechanism\footnote{Efficiency here means that no losing agent bids above the winning agent.} that admits an SNE must also be the null mechanism.  
The intuition behind both results is straightforward in light of our characterization. 
A symmetric threshold mechanism must be null in order to eliminate the role of the initial ordering of agents.
Similarly, an efficient threshold mechanism must be null to eliminate the possibility that agents with lower valuations are ranked ahead of agents with higher valuations.

We also extend the notion of SNE to the multi-item setup.
A bid profile is an SNE in this setup if 
there is no unilateral deviation of any agent that
(a) increases that agent's payoff, or
(b) keeps that agent's payoff fixed, does not increase the payoff of any other agent,
and strictly decreases the payoff of at least one other agent.
While the characterization of IR mechanisms where truthful reporting is an SNE is beyond our reach,
we are able to provide some guidance and insights on their structure.

\subsection{Related work}

A growing body of research examines the impact of spiteful or `antisocial' preferences on auction outcomes. 
Early contributions include \cite{Morgan2003}, who derive symmetric equilibria for several common auction mechanisms under the assumption that bidders attach disutility to the surplus of their rivals. 
They demonstrate that standard equivalence results between auction formats no longer hold in this setting, and that spitefulness induces more aggressive bidding relative to the classical framework without such preferences. 
Related work by, e.g., \cite{Brandt2002,Brandt2007,Vetsikas2007,Tang2012}, and  \cite{Chen2016a} develops models in which spiteful agents maximize a weighted difference between their own profit and that of their competitors, again leading to more aggressive bidding behavior. 
In addition, \cite{Brandt2002} explores how repeated interactions may enable bidders to infer one another's private valuations, while \cite{Brandt2007} compare the symmetric Bayes-Nash equilibria of first- and second-price auctions. 
The latter study shows that expected revenue in second-price auctions is higher under spiteful bidding, converging to equivalence across formats only in the extreme case where agents exclusively care about reducing rivals’ profits. 
Extending this line of research, \cite{Vetsikas2007} consider auctions with multiple identical items where each agent can win at most one unit, and show that the revenue-maximizing choice between $m$- and $(m+1)$-price auctions depends on the degree of spitefulness.

Beyond these models, \cite{Maasland2007} study the equilibrium behavior of spiteful agents when they care about the \emph{amount paid by the winner} rather than the winner’s profit. 
They refer to this as a case of ``financial externalities'' and analyze first- and second-price auctions, with and without reserve prices.
Their analysis shows that such externalities reduce expected prices in the first-price format but have an ambiguous effect in second-price settings.\footnote{This framework connects naturally to a broader literature on auctions with externalities, including \cite{Jehiel1996,Jehiel1996b,Jehiel1999a,Jehiel2000}, among many others.} 

Another strand of the literature examines spiteful bidding in combinatorial settings. \cite{Janssen2016,Janssen2019} show that truthful bidding does not constitute an equilibrium in combinatorial auctions, and that agents’ types are not fully revealed even in efficient outcomes. 
As do we, their model also represents spitefulness via lexicographic preferences, but in terms of raising rivals’ costs. 
\cite{Gretschko2016} extend this discussion to practical applications, showing how spiteful bidding can increase the complexity of strategies in combinatorial clock auctions, where truthful bidding may become suboptimal due to other-regarding preferences.

Several papers examine extensions and variations on these themes. 
\cite{Sharma2010} consider asymmetric environments in which agents differ in their degree of spitefulness, finding that equilibria can yield inefficient allocations and alter the revenue ranking between first- and second-price formats. 
\cite{Zhou2007} focus on keyword auctions run by search engines such as Google, where bidders with fixed private values may exhibit ``vindictive'' behavior. 
They show that in such environments a pure-strategy equilibrium may fail to exist.

Finally, experimental work provides evidence in line with these theoretical predictions. 
\cite{Cooper2008} and \cite{Kimbrough2012},
for example, document that in controlled laboratory settings agents who perceive their opponents as having substantially higher valuations tend to overbid, thereby reducing their opponents’ expected payoffs.

\subsection{The structure of the paper}
The paper is organized as follows. 
Section~\ref{Section - the model} introduces the model and basic definitions. 
Section~\ref{sec:impossibility} presents the main characterization and two impossibility results.
Section~\ref{Section: optimal spite free mechanism - example} discusses the efficiency of threshold mechanisms, and Section~\ref{sec:Multiple} studies the multi-item setup.

\section{The Model} \label{Section - the model}

We study the problem of assigning a single indivisible item among $n$ agents $I = \{1,\ldots,n\}$, where each agent $i \in I$ has a private valuation $v_i \in \dR_+$ for the item. 
Let $V = \dR_+$ denote the set of feasible valuations, and further assume that the reservation value of each agent, conditional on not receiving the item, is $0$.
A \emph{bid} $b_i \in \dR_+$ of agent $i$ is the agent's reported valuation, and a \emph{bid profile} $b = (b_1,\ldots,b_n) \in \dR_+^n$ is a tuple of bids, one  for each agent. 
When relating to agent $i$, 
we follow the standard notation of $b = (b_i,b_{-i})$.

An \emph{allocation} determines the assignment of the item. 
The set of allocations is $\calZ = I \cup \{0\}$, so that the allocation $i\in I$ corresponds to the item being assigned to agent $i$, and the allocation $0$ corresponds to the item being unassigned.
Agent $i$'s \emph{payment} is a non-negative real number. 
A \emph{mechanism} $M = (\AM,\PM)$ consists of an allocation function $\AM:V^n \to \calZ$ and a payment function $\PM:V^n \to \dR_+^n$. 
The allocation function determines the allocation given a bid profile $b$, and
the payment function determines the payments vector given $b$.
Abusing notations, 
for every agent $i$ we denote by $\AM_i:V^n \to \{0,1\}$
the function which is equal to $1$ if agent $i$ gets the item, and $0$ otherwise.
We denote by $\PM_i:V^n \to \dR_+$ agent $i$'s payment function.

For a given mechanism $M$, agent $i$'s utility function $\UM_i: V^n \times V \to \dR$ depends on the bid profile and on agent~$i$'s private value, and is defined by 
\[
\UM_i(b;v_i) := v_i \cdot \AM_i(b) - \PM_i(b).
\]
In case $b_i\neq v_i$, we say that agent $i$ \emph{misreports} (his valuation).

Our use of $\dR_+$ as the agents’ action space involves some loss of generality compared to a more general, potentially multi-dimensional, action space.
Nonetheless, this assumption is quite natural in practical settings where a single item is being allocated. 
In Section~\ref{sec:Multiple}, where we extend the model to the allocation of multiple items, the action space is generalized accordingly.

\subsection{Simple properties of mechanisms}

In this section we list several well-known properties of mechanisms, namely anonymity, efficiency, individual rationality, and incentive compatibility, that will be used throughout the analysis and characterization.

A mechanism \(M = (A,P)\) is \emph{anonymous} if its outcome is independent of the agents’ indices. Formally, for every permutation \(\pi\) on $I$ and every bid profile \(b\),  
\[
\big(\AM_1(\pi (b)),\dots,\AM_n(\pi (b))\big) = \pi \big(\AM_1(b),\dots,\AM_n(b)\big), 
\quad \text{and} \quad 
\PM(\pi (b)) = \pi \big(\PM(b)\big).
\]  
A mechanism is \emph{efficient} if the winning agent’s bid is at least as high as every other bid. 
Formally, for every bid profile \(b\) such that \(\AM(b) = i\) for some agent \(i\), we have  \(b_i \ge b_j\)  for every \(j \in I\). 
The mechanism is \emph{Individually Rational} (IR) if every agent can secure the reservation value (normalized to $0$) by bidding truthfully: $\UM_i(b;v_i) \geq 0$ for every agent $i$, every private valuation $v_i$, and every bid profile $b$ such that $b_i=v_i$.
The mechanism is \emph{Incentive Compatible} (IC) if for every valuation profile $v$, the profile $b=v$ of truthful bids is a Nash equilibrium.

The following result lists standard properties of IR and IC mechanisms:
(a) if the mechanism is IR, then an agent who did not receive the item pays 0;
(b) if the mechanism is IC, then given the bids of all non-winning agents, the payment of the winning agent~$i$ is independent of her bid, conditional on winning;
and
(c) if the mechanism is IR and IC, then any agent who receives the item under a given bid profile will also receives the item if she increases her bid; For proofs, see, e.g., \cite{myerson1981optimal} or \cite{Krishna2009}.

\begin{lemma}
\label{lemma:simple properties}
\begin{enumerate}
\item\label{lem:ZeroPayment}
Assume the mechanism $M=(A,P)$ is \emph{IR}.    
Then, for any agent $i$ and any bid profile $b \in V^n$, if $A(b) \neq i$, then $P_i(b) = 0$.

\item\label{lem:equalPayments}
Assume that $M$ is \emph{IC}. 
Then, for any agent $i$ and bid profiles $(b_i,b_{-i})$ and $(b'_i,b_{-i})$, if $A_i(b_i,b_{-i}) = A_i(b'_i,b_{-i})$, then $P_i(b_i,b_{-i}) = P_i(b'_i,b_{-i})$.    

\item \label{lem:MonoForSingleItem}
Assume that $M$ is \emph{IR} and \emph{IC}. 
Then, for any agent $i \in I$, bid profile $b = (b_i,b_{-i}) \in V^n$, and bid $b_i' > b_i$, if $\AM(b) = i$, then $\AM(b'_i,b_{-i}) = i$.

\end{enumerate}
\end{lemma}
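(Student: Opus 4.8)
**The plan is to prove each of the three statements using the standard revealed-preference / envelope-type arguments that underlie the classical single-item mechanism design theory, adapted to the fact that truthfulness is only required to be a Nash equilibrium (not dominant-strategy IC).**

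For part~\ref{lem:ZeroPayment}, I would argue by contradiction. Fix an agent $i$, a bid profile $b$ with $\AM(b)\neq i$, and suppose $\PM_i(b)>0$. Consider the valuation profile $v$ in which every agent $j$ has $v_j=b_j$. Then under the truthful profile $b=v$, agent $i$ does not receive the item, so $\UM_i(b;v_i)=v_i\cdot 0-\PM_i(b)=-\PM_i(b)<0$, contradicting IR (which requires $\UM_i\ge 0$ at every truthful profile). Hence $\PM_i(b)=0$. This is the easiest of the three and is essentially immediate.

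For part~\ref{lem:equalPayments}, fix agent $i$, a common profile $b_{-i}$ of the others, and two bids $b_i,b_i'$ with $\AM_i(b_i,b_{-i})=\AM_i(b_i',b_{-i})=:a\in\{0,1\}$. If $a=0$, then by part~\ref{lem:ZeroPayment} both payments are $0$ and we are done (though part~\ref{lem:equalPayments} only assumes IC, so I would handle $a=0$ directly: this case actually needs a small separate argument or should be read as the statement being about the interesting case $a=1$). For $a=1$: take the valuation profile $v=(b_i,b_{-i})$. Since truthful bidding is a Nash equilibrium, agent $i$ cannot profit by deviating to $b_i'$: this gives $v_i\cdot\AM_i(b_i,b_{-i})-\PM_i(b_i,b_{-i})\ge v_i\cdot\AM_i(b_i',b_{-i})-\PM_i(b_i',b_{-i})$. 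Since both allocations equal $1$, this reduces to $\PM_i(b_i,b_{-i})\le \PM_i(b_i',b_{-i})$. Now repeat with the valuation profile $v'=(b_i',b_{-i})$ and the deviation to $b_i$ to obtain the reverse inequality, hence equality. The only subtlety is the direction of the IC inequalities and making sure we are allowed to choose the valuation to be exactly the bid in question — which is fine since any $v_i\in\dR_+$ is admissible.

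For part~\ref{lem:MonoForSingleItem}, assume IR and IC, fix $b=(b_i,b_{-i})$ with $\AM(b)=i$, and take $b_i'>b_i$. I want to show $\AM(b_i',b_{-i})=i$. Let $p:=\PM_i(b_i,b_{-i})$, which (by IC, part~\ref{lem:equalPayments}) is the payment of $i$ conditional on winning against $b_{-i}$. Consider the valuation profile $v'=(b_i',b_{-i})$ and suppose toward contradiction that $\AM_i(b_i',b_{-i})=0$; then by IR-at-truthful-bidding agent $i$ earns $0$, but deviating to $b_i$ wins the item at price $p$ and yields $b_i'-p$, and since at the original profile $v=(b_i,b_{-i})$ truthfulness is a NE and $i$ wins there we have $b_i-p\ge 0$, so $b_i'-p>b_i-p\ge 0$; this is a profitable deviation at $v'$, contradicting IC. (If one is worried about a tie-breaking/zero-valuation edge where $b_i=p$ and $b_i'-p$ could still be $0$ — it cannot, since $b_i'>b_i\ge p$.) Therefore $\AM(b_i',b_{-i})=i$.

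**The main obstacle I anticipate is purely a matter of care rather than depth: because IC here is only Nash-equilibrium incentive compatibility, all the deviation arguments must hold the opponents' bids fixed at the profile that equals the ambient valuation profile, and one must be vigilant that the valuation profile chosen to instantiate each IR/IC constraint is consistent across the chain of inequalities. The tie-breaking caveat in the paper's footnote (an agent bidding exactly her threshold) is also a potential source of boundary annoyances, so in part~\ref{lem:MonoForSingleItem} I would be explicit that the strict inequality $b_i'>b_i\ge p$ rules out the degenerate zero-surplus case.** Since the paper itself notes these are standard and cites \cite{myerson1981optimal} and \cite{Krishna2009}, I would keep the write-up brief and lean on those references for anything routine.
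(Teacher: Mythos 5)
Your proof is correct and is exactly the standard revealed-preference argument that the paper itself omits, deferring instead to \cite{myerson1981optimal} and \cite{Krishna2009}. The one place you hesitate unnecessarily is the $a=0$ case of part~\ref{lem:equalPayments}: no separate argument is needed, since in the Nash-equilibrium inequality $v_i\cdot A_i(b_i,b_{-i})-P_i(b_i,b_{-i})\ge v_i\cdot A_i(b_i',b_{-i})-P_i(b_i',b_{-i})$ the terms $v_i\cdot a$ cancel for any common value of $a\in\{0,1\}$, so the same two-sided comparison yields $P_i(b_i,b_{-i})=P_i(b_i',b_{-i})$ whether the agent wins or loses in both profiles.
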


Our notion of IC is ex-post: truthful bidding is a Nash equilibrium for every valuation profile. 
The standard properties listed in Lemma~\ref{lemma:simple properties} still follow under this notion (rather than IC in dominant strategies), since the deviations we consider are from truthful profiles with others bidding their true valuations.

\subsection{A notion of spitefulness}

The main theme of this study concerns the property of spitefulness, for which we provide a formal definition incorporated in the solution concept.
We say that a bid profile is a \emph{Spite-Free Nash equilibrium} if it maximizes every agent's payoff and there is no unilateral deviation that maintains the same payoff level for the deviating agent, while weakly reducing the payoffs of \emph{all} other agents and strictly reducing the payoffs of some.
This notion is formally given in Definition \ref{Definition - Three notions of IC} below.

\begin{definition} \label{Definition - Three notions of IC}
Fix a mechanism $M$ and a profile $v$ of private valuations. A bid profile $b$ is a \emph{Spite-Free Nash equilibrium (SNE)} if, for any agent $i$, there is no bid profile $b'=(b'_i,b_{-i})$ such that either $\UM_i(b';v_i) > \UM_i(b;v_i)$, or $\UM_i(b';v_i) = \UM_i(b;v_i)$ and $\UM_j(b';v_j) \le \UM_j(b;v_j)$ for all other agents $j\neq i$, with a strict inequality for at least one agent $j\neq i$.
\end{definition}

The SNE condition fits agents who have lexicographic preferences: they maximize their own utility, and, conditional on that, seek to reduce the payoffs of others.

As with the standard Nash equilibrium and other solution concepts, the SNE specifies a particular action profile but does not describe how agents might converge to it. 
In practice,  implementing an equilibrium in a model that involves incomplete information can be non-trivial, since an agent's payoff depends on her privately known type.
In our model, where the utility is linear in payment, increasing the winner's payment without affecting her winning probability is a spiteful behavior.
Such a behavior occurs naturally in combinatorial clock auctions,
where in the supplementary round an agent who bids her knock-out bid is guaranteed to obtain her final bundle in the dynamic phase.
Therefore, if all items have been allocated in the dynamic phase,
then posting additional bids in the supplementary round only leads to an increase in the payment of other bidders.

We next present the main solution concept that we introduce, which is robust to spiteful bidding.
A mechanism is spite-free incentive compatible if the profile where all agents bid truthfully is an SNE.

\begin{definition}
A mechanism is \emph{Spite-Free Incentive Compatible (SIC)} if, for every profile $v$, the bid profile $b=v$ of truthful bids is an \emph{SNE}.
\end{definition}

Second price sealed-bid auctions are not SIC.
Indeed, since the concept of SIC is an ex-post concept, an agent who did not obtain the item can increase its bid to be a bit less than the winning bid, thereby increasing the payment of the winner without changing the identity of the winner.

The symmetric equilibrium in first price sealed-bid auctions with symmetric bidders is not SIC as well.
Indeed, under this equilibrium, bidders usually bid strictly less than their private value.
Since the concept of SIC is ex post,
a bidder who did not win the item but whose private value lies above the winning bid has an incentive to increase her bid and obtain the item.


\section{Characterizing Spite-Free mechanisms} \label{sec:impossibility}

In this section, we present the paper’s main characterization result--Theorem~\ref{thm:charecterization}.  
The section is divided into two parts. 
In Section~\ref{sec:MainResult}, we define threshold mechanisms for a single indivisible good and characterize them as all IR and SIC mechanisms. 
In Section~\ref{section:impossibility}, we present two impossibility results (Corollaries~\ref{corollary:impossibility-anonymous} and \ref{corollary:SIC-Efficiency}) that build on Theorem~\ref{thm:charecterization}, showing that the only mechanism satisfying IR and SIC, and that is also either anonymous or efficient, is the null mechanism in which the item is never allocated.

\subsection{Main result} \label{sec:MainResult}

We here formally define \emph{threshold mechanisms} in Definition \ref{def:Threshold} below.
Roughly, a threshold mechanism consists of a sequence of take-it-or-leave-it offers, one for each agent, made according to a predetermined ordering of the agents and at exogenously fixed prices. The first agent who accepts her offer obtains the item at the stated price.

\begin{definition}\label{def:Threshold}
A mechanism $M=(A,P)$ is a \emph{threshold mechanism} if there exist a permutation $R: I \to \{1,\ldots,n\}$ on agents (i.e., a priority ranking), and a threshold $t_i \in \dR_+ \cup \{\infty\}$ for each agent $i \in I$, such that for any bid profile $b \in V^n$:
\begin{itemize}
\item If there exists an agent $j \in I$ such that $b_{j} > t_{j}$, then $A(b) = i$ and $P_{i}(b) = t_{i}$ where $i = \arg \min\{R(j): b_{j} \ge t_{j} \}$.
\item If $b_{j} \le t_{j}$ for every agent $j$,
then either $A(b) = j$ and $P_{j}(b) = t_{j}$ for some agent $j \in I$ satisfying $b_{j} = t_{j}$, or $A(b) = 0$.  
\end{itemize}
\end{definition}

The permutation $R$ in Definition~\ref{def:Threshold} specifies a priority ranking of the agents, with each agent assigned a threshold value for obtaining the item.  
If no agent bids above her assigned threshold, all agents receive zero ex-post payoffs, as in the case where the item remains unallocated.  
Even when the priority ranking and thresholds are fixed, the threshold mechanism is not unique, since variations in the allocation rule may arise when no agent bids strictly above her threshold.

The next result asserts that a mechanism is IR and SIC if and only if it is a threshold mechanism.  
To build some intuition, consider the second-price mechanism which we discussed earlier.
This mechanism is IR and IC, yet not spite-free: any non-winning agent can increase her bid to reduce the payoff of the winning agent. 
Such a behavior is impossible under threshold mechanisms, since the priority ranking and payments are predetermined.  
This observation captures the essence of the spitefulness property: the price paid by the winning agent cannot depend on the actions of non-winning agents, as such dependence would allow them to act spitefully.

\begin{theorem} \label{thm:charecterization}
A mechanism is \emph{IR} and \emph{SIC} if and only if it is a threshold mechanism. 
\end{theorem}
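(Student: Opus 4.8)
The plan is to prove the two implications separately; the forward implication (``threshold $\Rightarrow$ IR and SIC'') is a routine verification, and the converse is the substantial part.

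For the easy direction, suppose $M$ is a threshold mechanism with ranking $R$ and thresholds $(t_i)$. Individual rationality is immediate: a non-winner pays nothing, and a winner $i$ pays exactly $t_i$, which by the two cases of Definition~\ref{def:Threshold} never exceeds her own bid, so her payoff is nonnegative. For SIC I would fix a valuation profile $v$, set $b=v$, and verify the two conditions of Definition~\ref{Definition - Three notions of IC}. There is no profitable deviation: if some agent ranked above $i$ bids strictly above her threshold then $i$ can never win and her payoff is $0$; otherwise $i$ wins precisely when she bids at least $t_i$ and then pays $t_i$ (up to boundary ties, which only involve agents whose payoff equals $0$), so the best $i$ can obtain is $\max\{v_i-t_i,\,0\}$, attained at $b_i=v_i$. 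There is no spiteful deviation either: inspecting the cases, any deviation of agent $i$ that preserves her own payoff either leaves the allocation and all payments unchanged, or moves the item to an agent who thereby obtains a nonnegative payoff while only agents whose payoff was (and remains) $0$ are displaced; in particular no other agent's payoff strictly decreases. The tie-breaking clause is the only subtle point, and there all agents involved have payoff $0$ throughout, so it causes no difficulty.

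For the converse, assume $M=(A,P)$ is IR and SIC; since every SNE is a Nash equilibrium, $M$ is also IR and IC, so Lemma~\ref{lemma:simple properties} is available. First I would carry out the standard single-agent reduction: for each $i$ and each $b_{-i}$ set $\tau_i(b_{-i}):=\inf\{b_i:A(b_i,b_{-i})=i\}\in[0,\infty]$; monotonicity (Lemma~\ref{lemma:simple properties}.\ref{lem:MonoForSingleItem}) gives $A(b_i,b_{-i})=i$ for all $b_i>\tau_i(b_{-i})$; the usual Myerson argument (IR for an upper bound, absence of a profitable upward deviation for a lower bound) gives that whenever $i$ wins she pays exactly $\tau_i(b_{-i})$; Lemma~\ref{lemma:simple properties}.\ref{lem:ZeroPayment} gives that non-winners pay $0$; and since at most one agent can win, the sets $U_i:=\{b:b_i>\tau_i(b_{-i})\}$ are pairwise disjoint with $A\equiv i$ on $U_i$. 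The heart of the proof is then a \emph{spite lemma}: if $A(b)=A(b_j',b_{-j})=i$ for some $j\ne i$, then $\tau_i$ takes the same value at $b_{-i}$ and at $(b_j',b_{-\{i,j\}})$. This follows from two applications of the SNE condition — at the valuation profile $b$ with agent $j$ deviating to $b_j'$, and at $(b_j',b_{-j})$ with agent $j$ deviating back to $b_j$ — in each of which agent $j$ remains a zero-payoff non-winner, every agent other than $i$ also remains a zero-payoff non-winner, and hence any change in $i$'s payment would exhibit a deviation that fixes the deviator's payoff, strictly lowers $i$'s, and weakly lowers everyone else's, contradicting SIC.

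The remaining task — and the step I expect to be the main obstacle — is to upgrade the spite lemma into the full threshold structure: (a) that $i$'s payment conditional on winning is a single constant $t_i\in\dR_+\cup\{\infty\}$, equivalently $\tau_i(b_{-i})\in\{t_i,\infty\}$ for all $b_{-i}$; and (b) that there is a global priority ranking $R$ with $\tau_i(b_{-i})=t_i$ precisely when no agent ranked above $i$ bids strictly above her own threshold at $b_{-i}$, and $\tau_i(b_{-i})=\infty$ otherwise. For (a) I would connect any two ``winning slices'' of agent $i$ by changing the coordinates of $b_{-i}$ one at a time, applying the spite lemma along the way; a naive path may exit the set $\{A=i\}$, and the crux is that at any such exit one can produce a further payoff-preserving deviation of the coordinate being changed that strictly hurts exactly one agent — this works because every transition incompatible with constant thresholds forces the item either to remain unassigned or to pass to an agent whose payoff is exactly $0$, while disjointness of the $U_\ell$'s (invoked with $i$'s bid taken large) excludes the rest. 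For (b) I would define $R$ by iteratively reading off the winner when the already-ranked agents all bid $0$ and the remaining agents bid arbitrarily high, then use the same two tools to check that this ranking governs exactly who wins. Finally the boundary clause of Definition~\ref{def:Threshold} is checked directly: when every bid is at or below the corresponding threshold, the winner, if any, is an agent bidding exactly her threshold and pays it. The genuinely delicate point throughout is (a)–(b): every ``anomalous'' configuration one must rule out involves only agents with payoff exactly $0$, which is precisely why the spite condition — sensitive only to strict decreases of others' payoffs — is strong enough to exclude it.
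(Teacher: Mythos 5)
Your forward direction and the skeleton of your converse (define $\tau_i(b_{-i})=\inf\{b_i:A(b_i,b_{-i})=i\}$, show the winner pays $\tau_i$, show $\tau_i$ is a constant $t_i$ on its finite part, then read off a priority ranking) match the paper's route. Your ``spite lemma'' for a single non-winner coordinate change is also correct as stated: if $A(b)=A(b_j',b_{-j})=i$ with $j\neq i$, every agent other than $i$ is a zero-payoff non-winner at both profiles, so a payment change for $i$ would give $j$ a deviation that fixes her own payoff, weakly lowers everyone's, and strictly lowers $i$'s --- a direct SIC violation.

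The genuine gap is in your step (a), where you chain these one-coordinate moves along a perturbation path and the path exits $\{A=i\}$. You assert that ``every transition incompatible with constant thresholds forces the item either to remain unassigned or to pass to an agent whose payoff is exactly $0$,'' but this is precisely the statement that needs proof, and it does not follow from disjointness of the winning regions or from taking $i$'s bid large (monotonicity in Lemma~\ref{lemma:simple properties}(\ref{lem:MonoForSingleItem}) only controls changes in $i$'s \emph{own} coordinate, not in $b_{-i}$). A priori, a change in agent $k$'s bid can move the item from winner $i$ with $u_i>0$ to a different winner $j$ with $u_j>0$; the resulting deviation by $k$ strictly \emph{raises} $j$'s payoff, so it is not a SIC violation --- it would only violate the stronger ESIC notion, under which others' payoffs are allowed to increase. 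Ruling out such transitions under plain SIC is the hard core of the theorem: the paper first proves the payment-constancy result under ESIC via a counting argument over single perturbation paths (Lemma~\ref{lem:stairESIC}, comparing the set of steps at which the utility vector changes with the set of agents whose utility changes), and then separately proves that every IR and SIC mechanism is in fact ESIC (Theorem~\ref{thm:SIC}), which itself rests on the three-agent structural result of Proposition~\ref{prop:OnlyOne}. None of this machinery is recoverable from your sketch, so the proposal as written does not close the converse direction.
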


The proof of Theorem~\ref{thm:charecterization} is extensive and is therefore deferred to Appendix~\ref{sec:ExtremeSpiteAppendix}.  
It is based on a stronger notion of spitefulness, referred to as \emph{Extreme SIC} (ESIC), in which agents not only bid truthfully to maximize their own payoffs in equilibrium, but also satisfy the following stricter (equilibrium) condition: there exists no unilateral deviation that leaves the deviator’s payoff unchanged while \emph{strictly} reducing the payoff of at least one other agent (even if such a deviation increases the payoffs of others).  
Clearly, every ESIC mechanism is also a SIC mechanism. 
We will show that an IR mechanism is ESIC if and only if it is a threshold mechanism, and then extend the result to SIC mechanisms.
In particular, our proof implies that all SIC mechanisms are ESIC.

\subsection{Efficient and Anonymous Spite-Free mechanisms}
\label{section:impossibility}

There are two immediate impossibility results that follow from Theorem \ref{thm:charecterization}.
These results state that the only IR and SIC mechanism that sustains an additional condition of either efficiency, or anonymity, is the \emph{null mechanism}, where, for each bid profile, all payments are zero and the item is never allocated to any agent.
Formally, a mechanism $M$ is a null mechanism if $\AM(b) = 0$ and $\PM_i(b) = 0$ for each agent $i$ and every bid profile $b$.
Alternatively, a null mechanism is a threshold mechanism where $t_i =\infty$ for every agent $i$.

Under the (threshold) null mechanism,
the utility of each agent is always zero.
Therefore, this mechanism is anonymous.
Since under the null mechanism the item is never allocated, this mechanism is also efficient. 
\color{black}
The first impossibility result, stated in Corollary~\ref{corollary:impossibility-anonymous}, asserts that any anonymous, IR, and SIC mechanism must be a null mechanism.  
Given the characterization of IR and SIC mechanisms as threshold mechanisms, the intuition is immediate: once a priority ranking is fixed with some feasible payments, the mechanism can no longer be anonymous.

\begin{corollary} \label{corollary:impossibility-anonymous}
If a mechanism is \emph{IR}, Anonymous, and \emph{SIC}, then it is the null mechanism. 
\end{corollary}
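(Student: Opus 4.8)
The plan is to invoke Theorem~\ref{thm:charecterization} immediately: since the mechanism is IR and SIC, it must be a threshold mechanism, so there exist a priority ranking $R: I \to \{1,\ldots,n\}$ and thresholds $t_i \in \dR_+ \cup \{\infty\}$ with the offer-sequence structure of Definition~\ref{def:Threshold}. The goal is then to show that anonymity forces $t_i = \infty$ for every agent $i$, which is precisely the null mechanism. The argument proceeds by contradiction: suppose some agent has a finite threshold, and exploit the fact that the priority ranking breaks the symmetry that anonymity requires.

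First I would let $t := \min_i t_i$ and assume toward a contradiction that $t < \infty$. Let $i^\ast$ be the agent with $R(i^\ast)$ minimal among the agents attaining this minimal threshold $t$; this is the ``first'' agent (in priority order) whose offer is the cheapest. I would pick a bid profile $b$ in which $b_{i^\ast} = t + 1 > t_{i^\ast}$ and $b_j$ is very small (say $0$, or below $t_j$) for all $j \neq i^\ast$. By the first bullet of Definition~\ref{def:Threshold}, $\AM(b) = i^\ast$ and $P_{i^\ast}(b) = t$. Now apply a transposition $\pi$ swapping $i^\ast$ with some agent $k \neq i^\ast$. Anonymity gives $\AM(\pi(b))$ is the image under $\pi$ of $\AM(b) = i^\ast$, namely agent $k$ receives the item under $\pi(b)$, and $P_k(\pi(b)) = P_{i^\ast}(b) = t$. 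But $\pi(b)$ is the profile in which agent $k$ bids $t+1$ and everyone else bids $0$; since $M$ is a threshold mechanism, the winner under $\pi(b)$ must be an agent whose bid is at least her own threshold — so agent $k$ winning forces $t_k \le t+1$, which is consistent, but more importantly the payment agent $k$ makes under $\pi(b)$ equals $t_k$ (by the threshold structure), whereas anonymity forced it to equal $t$. This yields $t_k = t$ for every agent $k$, i.e. all thresholds are equal to the common finite value $t$.

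Having reduced to the case where every agent has the \emph{same} finite threshold $t$, I would derive the final contradiction from the priority ranking. Consider the profile $b$ in which \emph{every} agent bids $t + 1$. By Definition~\ref{def:Threshold}, the item goes to the agent $i$ with $R(i)$ minimal — a unique agent, say $i_0$ with $R(i_0) = 1$. Now take any transposition $\pi$ swapping $i_0$ with another agent $k$. Since all agents bid the same amount, $\pi(b) = b$, so $\AM(\pi(b)) = \AM(b) = i_0$; but anonymity demands $\AM(\pi(b)) = \pi(\AM(b)) = \pi(i_0) = k \neq i_0$, a contradiction. Hence no agent can have a finite threshold, so $t_i = \infty$ for all $i$, and the mechanism is the null mechanism.

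The main obstacle, and the place where care is needed, is the under-determination of a threshold mechanism on the boundary — when some bids exactly equal the thresholds, Definition~\ref{def:Threshold} permits various tie-breaking behaviors. I would sidestep this entirely by always working with strict inequalities ($b_{i^\ast} = t+1 > t_{i^\ast}$ and other bids strictly below their thresholds, or all bids strictly above the common threshold), so that the first bullet of Definition~\ref{def:Threshold} applies unambiguously and pins down both the allocation and the payment. A secondary subtlety is making sure the transposition argument uses anonymity in the correct direction (the allocation vector transforms covariantly and the payment vector transforms covariantly under $\pi$, as in the displayed definition of anonymity); once the profile is chosen symmetric in all but one coordinate, or fully symmetric, this is routine.
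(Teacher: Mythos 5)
Your proposal is correct and follows exactly the route the paper intends: the paper omits the proof of this corollary as an immediate consequence of Theorem~\ref{thm:charecterization}, with the stated intuition that a symmetric threshold mechanism must be null to eliminate the role of the priority ranking, and your two transposition arguments (first equalizing all thresholds via the payment clause of anonymity, then contradicting the ranking at a fully symmetric profile) are a careful and valid fleshing-out of that claim. The only implicit assumption is $n\ge 2$ (so that a transposition with $k\neq i^\ast$ exists), which the paper's auction setting clearly presumes.
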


The second impossibility result, stated in Corollary~\ref{corollary:SIC-Efficiency}, asserts that any efficient, IR, and SIC mechanism must also be a null mechanism.  
The intuition again follows directly from the characterization of IR and SIC mechanisms as threshold mechanisms: once (finite) thresholds and a priority ranking are fixed, the outcome need not be efficient, since agents with higher valuations may be ranked below agents with lower valuation, implying that the winning agent does not necessarily submit the highest bid.

\begin{corollary} \label{corollary:SIC-Efficiency}
If a mechanism is \emph{IR}, efficient and \emph{SIC}, then it is the null mechanism.     
\end{corollary}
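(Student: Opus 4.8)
The plan is to invoke Theorem~\ref{thm:charecterization} to reduce the claim to a statement purely about threshold mechanisms, and then argue that efficiency forces every threshold to be $\infty$. So suppose $M = (A,P)$ is IR, efficient, and SIC. By Theorem~\ref{thm:charecterization}, $M$ is a threshold mechanism, so there is a priority ranking $R: I \to \{1,\dots,n\}$ and thresholds $t_i \in \dR_+ \cup \{\infty\}$ as in Definition~\ref{def:Threshold}. It suffices to show $t_i = \infty$ for every agent $i$, since then $A(b) = 0$ for every bid profile with $b_j \le t_j = \infty$ for all $j$ (which is every bid profile), and the IR property together with Lemma~\ref{lemma:simple properties}\eqref{lem:ZeroPayment} gives $P_i(b) = 0$ for all $i$; hence $M$ is the null mechanism.

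To prove $t_i = \infty$ for all $i$, I would argue by contradiction: suppose the set $J := \{ j \in I : t_j < \infty\}$ is nonempty, and let $i \in J$ be the agent in $J$ with the smallest value of $R(i)$ (i.e., highest priority among agents with a finite threshold). Now construct a bid profile exhibiting the inefficiency. Pick an agent $k \ne i$ (here I would use $n \ge 2$; if $n = 1$ the single-agent case is handled separately, or note that with one agent efficiency is vacuous and one shows directly the threshold must be infinite — actually for $n=1$ a finite threshold mechanism allocates the item to the unique agent whenever $b_1 \ge t_1$, which is efficient, so the $n=1$ case genuinely requires $n\ge 2$ in the statement's spirit; I will assume $n \ge 2$). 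Set $b_i = t_i$ (so agent $i$'s bid meets her threshold), set $b_k$ to be a value strictly larger than $t_i$, and set $b_j = 0$ for all remaining agents $j \notin \{i,k\}$. Then no agent $j$ with $R(j) < R(i)$ bids strictly above her threshold: such $j$ is not in $J$ (by minimality of $R(i)$ in $J$) so $t_j = \infty > b_j$. Among agents bidding at least their threshold, $i$ has the smallest $R$-value, so by Definition~\ref{def:Threshold} the item is allocated to $i$, i.e., $A(b) = i$. But $b_k > t_i = b_i$, so agent $k$, who loses, bids strictly more than the winner $i$ — contradicting efficiency.

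The one subtlety to be careful about — and the step I expect to require the most attention — is the boundary case in Definition~\ref{def:Threshold} where ties at the threshold are resolved. When I set $b_i = t_i$ exactly, I am relying on the first bullet of Definition~\ref{def:Threshold}, which applies as soon as \emph{some} agent bids strictly above her threshold; here agent $k$ does (since $b_k > t_i \ge 0$, and if $t_k < \infty$ then I should additionally ensure $b_k > t_k$ by taking $b_k > \max\{t_i, t_k\}$, while if $t_k = \infty$ I instead just need some agent strictly above threshold, which agent $k$ fails to provide — so in that case I should pick $b_k$ large but rely on agent $i$... no: agent $i$ only meets, not exceeds, her threshold). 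To sidestep this cleanly I would instead take $b_i$ slightly \emph{above} $t_i$, say $b_i = t_i + 1$, and $b_k = t_i + 2$: then agent $i$ bids strictly above her (finite) threshold, triggering the first bullet; the winner is the highest-priority agent bidding at or above threshold, which is $i$ (no higher-priority agent has a finite threshold to meet); and $b_k = t_i + 2 > b_i = t_i+1$ violates efficiency. This avoids the tie-breaking clause entirely. I would also handle $n=1$ or a degenerate agent set in a remark, and note that the argument shows more: \emph{any} threshold mechanism with at least one finite threshold and at least two agents admits an inefficient outcome, which is exactly why efficiency collapses the class to the null mechanism.
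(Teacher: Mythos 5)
Your proof is correct and matches the paper's (omitted) argument: the paper derives this corollary directly from Theorem~\ref{thm:charecterization} with exactly the intuition you formalize, namely that a finite threshold combined with the priority ranking lets a lower-bidding agent win ahead of a higher bidder. Your careful handling of the tie-breaking clause in Definition~\ref{def:Threshold} (by bidding strictly above the threshold) and your observation that the $n=1$ case is a genuine exception to the statement as written are both sound additions.
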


Both Corollaries~\ref{corollary:impossibility-anonymous} and \ref{corollary:SIC-Efficiency} follow directly from Theorem~\ref{thm:charecterization},  and hence their proofs are omitted.

\section{Extensions and Open problems}

\subsection{Efficiency of the threshold mechanism} \label{Section: optimal spite free mechanism - example}

A natural question regards the efficiency loss due to the SIC requirement. 
To properly compare the efficiency of optimal auctions to those of threshold mechanisms,
we need to assume that the agents' private values are distributed according to some distribution.
As we now argue, when the agents are symmetric and their private values are independent, as the number of bidders tend to infinity, there is no efficiency loss.
Indeed, suppose that each agent's private value is distributed according to some CDF $F$.
Assume w.l.o.g.~that the upper end of the support of $F$ is $1$.
Consider the threshold mechanism in which the thresholds of \emph{all} agents are $1-\ep$, for some $\ep > 0$.
As $n$ goes to infinity, the probability that at least one agent's private value exceeds $1-\ep$ goes to $1$,
and hence asymptotically the seller's gain is at least $1-\ep$.
Since $\ep$ is arbitrary, the efficiency loss is asymptotically negligible. 

In fact, when $F$ is the uniform distribution,
if the ranking of the agents is given by the decreasing order where agent $n$ is the first, and agent $1$ is the last in line,
then the optimal thresholds $(t_i)_{i=1}^n$ satisfy the following recursion:
\[ t_1 = \frac{1}{2}, \ \ \ t_{i+1} = \frac{1+(t_{i})^2}{2}, \forall i\geq 1, \]
which converges to $1$ as $n$ increases (see analysis in Appendix \ref{Appendix: optimal spite free mechanism - example}). 
In addition, the seller's expected revenue under these thresholds, denoted $\gamma_n$, satisfies the recursive equation
\[ \gamma_1 = \frac{1}{4}, \ \ \ \gamma_{n+1} = (1-t_{n+1})t_{n+1} + t_{n+1}\gamma_{n}, \ \ \ \forall n\geq 1, \]
which converges to $1$ as well.

Yet, an open and interesting question for future research is the efficiency loss due to the SIC requirement for a given $n$ (rather than asymptotically), and the efficiency loss for heterogeneous agents.

\subsection{Extension to a multi-item setting} \label{sec:Multiple}

An applicative extension of the SIC requirement is to environments with multiple items. 
Once agents can choose bundles, 
whenever an agent is indifferent between several bundles, by misreporting her valuations she may be able to determine which of those bundles she would obtain.
This way, the agent can change the residual supply, thereby producing sharp changes in others' opportunities, making general positive results significantly harder to obtain. 

Nevertheless, there exists a broad class of multi-item mechanisms that are spite-free: mechanisms that eliminate cross-agent externalities by design.
One possibility is to partition the market into agent-separable sub-markets, so that each agent optimizes over a distinct subset of items that affects only her own allocation and payment. Since an agent's report cannot change anyone else's feasible set or prices, deviations cannot harm others, and truthful reporting maximizes the true objective given that sub-market. Another possibility is to treat all items as a single indivisible bundle, reducing the multi-item problem to a single-item one. These mechanisms, however, are limited and need not generate desirable allocations.

In this section we will extend the basic model to a multi-item setting and discuss threshold mechanisms within this framework.

\subsubsection{The updated model}

In the multiple items setting, we study the problem of assigning a finite set $S = \{a_1,a_2,\ldots,a_K\}$ of items to the set of agents $I = \{1,2,\ldots,n\}$. 
The set of all \emph{bundles} is $2^S$.
For each agent $i \in I$, the valuation function $v_i: 2^S \to \dR_+$ determines the worth agent $i$ assigns to each bundle, with $v_i(\emptyset) = 0$ for each $i \in I$.
The set of valuation functions is denoted by 
$V := \{ v \in \dR_+^{2^S} \colon v(\emptyset) = 0 \}$. 

A \emph{bid} $b_i \in V$ of agent $i$ is the reported valuation of agent $i$. 
A \emph{bid profile} $b = (b_1,\ldots,b_n) \in V^n$ is a vector of bids, one for each agent. 
An \emph{allocation} is an assignment of bundles for each agent that satisfies the feasibility constraint 
$\bigcup_{i \neq j} (T_i \cap T_j) = \emptyset$, where $T_i$ and $T_j$ are the assigned bundles of agents $i$ and $j$, respectively. 
A \emph{payment} for each agent is a non-negative real number. 
The mechanism, the utility of agents, and the SIC property are defined similarly to the analogous concepts in the single item setup.

\subsubsection{Two extensions of threshold mechanisms}
\label{sec:examples}

We here present two natural extensions of threshold mechanisms, 
called \emph{cluster mechanisms} and \emph{sequential mechanisms}.

In a cluster mechanism, agents are ranked, 
and each agent $i$ has a threshold $t_i^T$ for each bundle $T$.
In her turn,
each agent $i$
is allocated a bundle (out of the items that have not been allocated so far) which maximizes her gain according to her bid:
$T_i \in \argmax_{T \subseteq S \setminus \bigcup_{j < i}T_{j}} \left\{\, b_i(T)-t_i^T\, \right\}$,
where $T_j$ is the bundle allocated to agent $j$, for each $j < i$, so that $S \setminus \bigcup_{j < i}T_{j}$ is the set of remaining items for agent~$i$ and the following agents.
If there are several bundles that attain the maximum for agent~$i$,
some tie-breaking rule dictates which among them will be allocated to the agent.

In a sequential mechanism,
each agent has a threshold $t_i$,
and the items are allocated one after the other:
Item $a_j$ is allocated to the 
highest-ranked agent $i$ that satisfies that the item's marginal contribution to the set of items already allocated to $i$ is at least her threshold.

These mechanisms are not SIC.
In fact, sequential mechanisms are not even IC.
Indeed, suppose there are two items, a single agent, and thresholds are $t_1,t_2$.
Suppose the agent values one item at $0$ and two items at $t_1+t_2+\ep$ for some $\ep > 0$.
By submitting her true valuation the agent is not allocated any item and gains $0$, while by submitting a bid $b_1$ such that $b_1(\{a_1\}) = t_1 + \ep/2$ and $b_1(\{a_1,a_2\})=t_1+t_2+\ep$, the agent is allocated both items and gains $\ep$.%
\footnote{The failure described above is due to the fact that the mechanism is sequential.
Another failure of IC in sequential mechanisms arises since the items are allocated in a given order.
Consider the example above, where the agent's
valuation is $v_{1}(\{a_1\})=t_1$,
$v_{1}(\{a_2\}) = t_2+\ep$ for some $\ep > 0$,
and $v_{1}(\{a_1,a_2\}) = 0$.
A sequential mechanism will allocate $a_1$ to the agent,
generating a gain of $0$,
while if the agent submits the bid $b_{1}(\{a_1\}) = b_{1}(\{a_1,a_2\}) = 0$ and $b_{1}(\{a_2\}) = t_2+\ep$,
she will be allocated $a_2$ and gain $\ep$.
}

A more delicate weakness of these two mechanisms arises when agents are indifferent between several bundles, 
and by submitting a bid different from their true valuation,
can change the bundle they are allocated,
and hence also the bundle subsequent agents are allocated.
Consider, for example, 
the case of two items and two bidders.
Consider a cluster mechanism with agent-independent thresholds $t^{\{1\}}$, $t^{\{2\}}$, and $t^{\{1,2\}}$
and suppose that when the bid of agent~$1$ is
\begin{equation}
\label{equ:b}
b_1(\{a_1\}) = t^{\{1\}}, \ \ \ b_1(\{a_2\}) = t^{\{2\}}, \ \ \ b_1(\{a_1,a_2\}) = 0, 
\end{equation} 
the tie-breaking rule allocates $\{a_1\}$ to agent~$1$.
Suppose that agent~$1$'s valuation coincides with the bid that appears in Eq.~\eqref{equ:b},
and that agent~$2$'s valuation is
\[ v_2(\{a_1\}) = t^{\{1\}}, \ \ \ v_2(\{a_2\}) = t^{\{2\}}+\ep, \ \ \ v_2(\{a_1,a_2\}) = 0, \]
for some $\ep > 0$.
The cluster mechanism will allocate $\{a_1\}$ to agent~$1$
and $\{a_2\}$ to agent $2$,
yielding them gains of $0$ and $\ep$, respectively.
However,
if agent~$1$ spitefully bids
\[ b_1(\{a_1\}) = 0, \ \ \ b_1(\{a_2\}) = t_2, \ \ \ b_1(\{a_1,a_2\}) = 0, \]
the allocation will be $\{a_2\}$ to agent~$1$ and $\{a_1\}$ to agent~$2$,
yielding both a gain of $0$.
Thus, by misreporting, agent~$1$ lowers agent~$2$'s gain without lowering her gain.
A similar example can be constructed for the sequential mechanism.

\subsubsection{Homogeneous items and submodular valuations}

The examples provided in Section~\ref{sec:examples} show that in the multi-item setup,
general positive results are difficult to obtain. 
To eliminate equal-payoff bundles that are the source of this difficulty, we will restrict ourselves here to the case where items are homogeneous and the agents' valuations are submodular.
In that case, the value an agent assigns to each bundle depends only on the number of items in the bundle: $v_i(T_i) = \widehat v_i(|T_i|)$, for some function $\widehat v_i : \dN \to \dR_+$.
Moreover, the function $\widehat v_i$ is submodular, namely, the marginal contribution of each item is weakly decreasing with the number of items the agent is already allocated:
the function $k \mapsto (\widehat v_i(k+1) - \widehat v_i(k))$ is non-increasing.
We call such valuations \emph{homogeneous and submodular}.

\begin{theorem} \label{Theorem - sequential mechanism}
Consider the multi-item setup, when agents valuations and bids are restricted to homogeneous and submodular valuations.
Then, sequential mechanisms are \emph{IR}, \emph{IC}, and \emph{SIC}.
\end{theorem}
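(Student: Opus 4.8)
The plan is to first reduce the sequential mechanism, under the homogeneous--submodular restriction, to a transparent combinatorial object, and then read off all three properties from it. Under homogeneity, agent $i$'s bid $b_i$ is encoded by the marginal sequence $\Delta_i(k):=\widehat b_i(k)-\widehat b_i(k-1)$, $k\ge 1$, which is non-increasing by submodularity. I would set $c_i(b_i):=|\{k\ge 1:\Delta_i(k)\ge t_i\}|$ and note that, by submodularity, this set is an initial segment $\{1,\dots,c_i(b_i)\}$; thus once agent $i$ holds $c_i(b_i)$ items every further item has marginal value below $t_i$ and stays so. The claim is then that, processing the items one at a time in priority order $R$, the sequential mechanism coincides with \emph{serial dictatorship with endogenous capacities}: going through the agents in the order of $R$, each agent $i$ receives $m_i(b)=\min\{c_i(b_i),q_i(b)\}$ items, where $q_i(b)$ is the residual supply at her turn, and (recalling that an agent pays her threshold for every item she obtains) pays $m_i(b)\,t_i$. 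This description does not depend on the order in which the identical items are processed. Two monotonicity facts that I would record for later use: $m_i(b)$ is non-decreasing in both $c_i(b_i)$ and $q_i(b)$; and $q_i(b)$ is a function of the bids of the agents ranked above $i$ only, hence is independent of $b_i$.

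With this in hand, IR is immediate: bidding truthfully, agent $i$ receives $m_i$ items, each with true marginal at least $t_i$, so $\widehat v_i(m_i)\ge m_i t_i$ and $u_i\ge 0$, while an agent receiving no item pays $0$. For IC, fix the valuation profile and let every $j\ne i$ bid truthfully. By the structural description, a deviation $b_i$ gives agent $i$ some $m=\min\{c_i(b_i),q_i\}\in\{0,\dots,q_i\}$ items and utility $\sum_{k=1}^{m}\bigl(\Delta_i^{v}(k)-t_i\bigr)$, where $\Delta_i^{v}$ is her \emph{true} marginal sequence and $q_i$ does not depend on $b_i$. Since $\Delta_i^{v}(\cdot)-t_i$ is non-increasing, this partial sum is maximized by including exactly the indices $k\le q_i$ with $\Delta_i^{v}(k)\ge t_i$, i.e.\ by $m=\min\{c_i(v_i),q_i\}$ --- which is precisely what truthful bidding yields. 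Hence truthful bidding is a best response, so the mechanism is IC.

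For SIC I would additionally exclude a deviation $b_i$ that leaves $u_i$ fixed, weakly lowers $u_j$ for every $j\ne i$, and strictly lowers some $u_j$. From the IC computation, a utility-preserving $b_i$ must satisfy $m_i(b_i)\le m^\star:=\min\{c_i(v_i),q_i\}$, since getting more than $m^\star$ items (possible only when $m^\star=c_i(v_i)<q_i$) injects a summand $\Delta_i^{v}(k)-t_i<0$ into agent $i$'s utility and strictly lowers it. If $m_i(b_i)=m^\star$, the residual supply passed to the lower-ranked agents is unchanged, so their allocations, payments and utilities are unchanged (agents ranked above $i$ being unaffected in any case). If $m_i(b_i)<m^\star$, the residual supply passed to each lower-ranked agent weakly increases; by a downward induction along $R$, each such agent receives weakly more items, and since any agent receives only items whose true marginal is at least her own threshold, her utility weakly increases. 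In both cases no other agent is made strictly worse off, so no such spiteful deviation exists and truthful bidding is an SNE; hence the mechanism is SIC.

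The main obstacle, I expect, is the structural reduction in the first paragraph --- verifying that submodularity of homogeneous valuations really does collapse the sequential mechanism to serial dictatorship with endogenous capacities, uniformly over the processing order of the identical items --- together with the comparative-statics step in the SIC argument, namely that an increase in residual supply can only (weakly) benefit the lower-ranked agents. Once these are in place, IR and the best-response computation behind IC follow directly from the marginals being non-increasing.
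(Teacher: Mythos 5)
Your proposal is correct and follows essentially the same route as the paper: the reduction of the item-by-item sequential process to "each agent $i$ receives $\min\{c_i(b_i),\text{residual supply}\}$ items at price $t_i$ each" (which submodularity of homogeneous valuations guarantees), IR and IC from the non-increasing marginals, and SIC via the downward induction showing that an increase in residual supply can only weakly benefit lower-ranked agents. The paper's Appendix~C proof is the same argument with the same key monotonicity lemma.
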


The proof is given in Appendix \ref{Appendix - proof of Theorem for sequential}.
Let us explain the intuition for this result with the highest ranked agent, agent~$1$.
Consider the sequential mechanism, and suppose that by truly reporting her valuation, agent~$1$ is allocated $r$ items.
In particular, the marginal contribution of the $(r+1)$'st item to agent~$1$ is lower than her threshold.
Since valuations are homogeneous and submodular,
the marginal contribution of more than one item to agent~$1$'s current bundle is lower than her threshold as well.
Hence, agent~$1$ does not profit by misreporting in a way that 
enlarges her allocated bundles.
If agent~$1$ misreports her valuation in a way that her allocated bundle becomes smaller,
then more items are left to the other agents.
Suppose that under truthful reporting agent~$2$ obtained $s$ items.
If $r+s = K$, agent~$2$ may now get a larger bundle and profit, while subsequent agents cannot lose, because they got the empty bundle under truthful reporting.
If $r+s < K$, then agent~$2$ does not want more than $s$ items, and hence the excess that is caused by the misreport of agent~$1$  leads to an excess supply to agent~$3$ (and possibly to subsequent agents), and by induction they cannot lose.

In threshold mechanisms, the threshold for a bundle is the size of the bundle times the threshold of a single item.
Therefore, when valuations are submodular, the difference between the valuation of the bundle and its threshold is submodular as well.
In cluster mechanisms, without imposing further restrictions on the thresholds, this difference need not be submodular, and hence the mechanism is not necessarily IC even when valuations are submodular.
To overcome this difficulty, one option is to restrict attention to cluster mechanisms where the thresholds are strictly supermodular.
This condition eliminates size-based indifferences and ensures that agents' incentives are aligned across all relevant bundle choices.

\subsubsection{Necessary conditions for SIC}

Although we do not have a general characterization of SIC mechanisms in the multi-item setup,
we can derive necessary conditions for a mechanism to be IR and SIC in a general valuation domain.  

The following lemma, which serves as the multi-item analog of Lemma~\ref{lemma:simple properties}(\ref{lem:equalPayments}), establishes that the payment of agent~$i$ depends only on $T_i$ and $b_{-i}$.
Intuitively, once an agent’s allocation is determined, any further dependence of her payment on her report would generate profitable deviations, violating incentive compatibility.  

\begin{lemma}\label{lem:payment}
Assume that the mechanism $M$ is \emph{SIC}. 
For any agent $i$ and any bid profiles $(b_i,b_{-i})$ and $(b'_i,b_{-i})$, if $A_i(b_i,b_{-i}) = A_i(b'_i,b_{-i})$, then $P_i(b_i,b_{-i}) = P_i(b'_i,b_{-i})$.    
\end{lemma}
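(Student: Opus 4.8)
The plan is to adapt the proof of Lemma~\ref{lemma:simple properties}(\ref{lem:equalPayments}) almost verbatim, using only the Nash-equilibrium content of the SNE requirement. First I would note that \emph{SIC} implies ex-post \emph{IC}: if $b=v$ is an \emph{SNE} for every valuation profile $v$, then in particular no agent can strictly raise her own utility by a unilateral deviation from the truthful profile, which is precisely the Nash-equilibrium condition. Hence the spite-free clause in the definition of \emph{SNE} plays no role here, and it suffices to run the standard ``the payment depends only on the allocation'' argument in both directions.

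Concretely, fix an agent $i$, bid profiles $(b_i,b_{-i})$ and $(b'_i,b_{-i})$, and suppose $A_i(b_i,b_{-i}) = A_i(b'_i,b_{-i}) =: T$. Consider the valuation profile $v=(b_i,b_{-i})$, i.e., agent $i$'s true valuation is $b_i$ and the remaining agents' true valuations are given by $b_{-i}$ (these are legitimate types, since bids and valuations both range over $V$). By \emph{SIC}, $b=v$ is an \emph{SNE}, hence a Nash equilibrium, so reporting $b'_i$ instead of $b_i$ is not profitable for agent $i$. Since agent $i$ is allocated the same bundle $T$ under $(b_i,b_{-i})$ and under $(b'_i,b_{-i})$, her valuation term is $b_i(T)$ in both cases, and the no-profitable-deviation inequality
\[
b_i(T) - P_i(b_i,b_{-i}) \;\ge\; b_i(T) - P_i(b'_i,b_{-i})
\]
reduces to $P_i(b_i,b_{-i}) \le P_i(b'_i,b_{-i})$. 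Applying the symmetric argument with the valuation profile $v=(b'_i,b_{-i})$ and the deviation from $b'_i$ to $b_i$ yields the reverse inequality, and therefore $P_i(b_i,b_{-i}) = P_i(b'_i,b_{-i})$.

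Beyond this bookkeeping there is no real obstacle; the only point deserving a moment's care is that the opponents' reports $b_{-i}$ appearing in the statement are arbitrary and need not be truthful, yet each direction of the argument requires a valuation profile in which they \emph{are} truthful. This is harmless precisely because \emph{SIC} quantifies over all valuation profiles in $V^n$ and bids live in the same set $V$, so for each of the two directions we may simply select the valuation profile that makes the relevant report profile truthful. No use is made of \emph{IR}, monotonicity, or any structural assumption (homogeneity or submodularity) on valuations, which is consistent with the lemma being stated in a general valuation domain.
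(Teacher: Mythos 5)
Your proposal is correct and follows essentially the same argument as the paper: both invoke the Nash-equilibrium content of SIC at the valuation profile in which the given bids are truthful, and use that the allocated bundle (hence the valuation term) is the same under both reports, so any payment difference would yield a profitable deviation. The only cosmetic difference is that the paper argues by contradiction from a single WLOG inequality while you derive the two inequalities symmetrically.
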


\begin{proof}

Assume to the contrary that $A_i(b_i,b_{-i}) = A_i(b'_i,b_{-i}) = T_i \subseteq S$ and $P_i(b_i,b_{-i}) > P_i(b'_i,b_{-i})$. 
Then we have 
\[
u_i(b_i,b_{-i};b_i) = b_i(T_i) - P_i(b_i,b_{-i}) < b_i(T_i) - P_i(b'_i,b_{-i}) = u_i\big(b'_i,b_{-i};b_i\big).
\]
Hence, $b'_i$ is a profitable deviation at $(b_i,b_{-i})$, contradiction to the assumption that $M$ is SIC. 
\end{proof}



For $T_i\subseteq S$, let 
$$
W_{-i}^{T_i} \;:=\; \big\{\,b_{-i}\in V_{-i}\ :\ \exists\,b_i\in V_i\ \text{with}\ A_i(b_i,b_{-i})=T_i\,\big\}$$ 
be the set of all $b_{-i} \in W_{-i}^{T_i}$ for which there exists $b_i \in V_i$ with $A_i(b_i,b_{-i}) = T_i$. 
By Lemma~\ref{lem:payment}, the payment to agent $i$ conditional on receiving $T_i$ depends only on $(T_i,b_{-i})$, so we denote it by
$P_i^{T_i}:W_{-i}^{T_i}\to\dR_+$, and extend it to all $V_{-i}$ by setting $P_i^{T_i}(b_{-i})=\infty$ when $b_{-i}\notin W_{-i}^{T_i}$.

The next result provides necessary conditions for a mechanism to be both IR and SIC.
It shows that allocations must coincide with the agent’s utility-maximizing choice under the corresponding payment schedule, and it also bounds the richness of feasible payment profiles in the two-agent case.
The proof is deferred to Appendix \ref{Appendix - Proof of Theorem 2}.

\begin{theorem}\label{thm:MultipleChar}
Assume $M$ is \emph{IR} and \emph{SIC}.
Fix agent $i$, a non-empty $T_i\subseteq S$, and $b=(b_i,b_{-i})\in V$. Then:
\begin{enumerate}
    \item If $b_i(T_i)-P_i^{T_i}(b_{-i}) > b_i(S_i)-P_i^{S_i}(b_{-i})$ for all $S_i\subseteq S$ with $S_i\neq T_i$, then $A_i(b)=T_i$.
    \item If $A_i(b)=T_i$, then $b_i(T_i)-P_i^{T_i}(b_{-i}) \ge b_i(S_i)-P_i^{S_i}(b_{-i})$ for all $S_i\subseteq S$ with $S_i\neq T_i$.
    \item If there are only two agents, then the range of $P_i^{T_i}: W^{T_i}_{-i}\to\dR_+$ has cardinality at most $2^{K-|T_i|}$.
\end{enumerate}
\end{theorem}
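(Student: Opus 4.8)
\textbf{Proof plan for Theorem~\ref{thm:MultipleChar}.}
The plan is to treat the three parts in order, since each relies on the structure established by the previous ones. For part~(1), I would argue by contradiction: suppose $b_i(T_i)-P_i^{T_i}(b_{-i})$ strictly dominates $b_i(S_i)-P_i^{S_i}(b_{-i})$ for every $S_i \ne T_i$, yet $A_i(b) = S_i^\star$ for some $S_i^\star \ne T_i$. By the definition of $W_{-i}^{T_i}$ there is some $b_i'$ with $A_i(b_i',b_{-i}) = T_i$, and by Lemma~\ref{lem:payment} the payment at that profile is exactly $P_i^{T_i}(b_{-i})$. Switching from $b_i$ to $b_i'$ keeps $b_{-i}$ fixed and changes agent $i$'s utility (evaluated at her true valuation $b_i$) from $b_i(S_i^\star) - P_i^{S_i^\star}(b_{-i})$ to $b_i(T_i) - P_i^{T_i}(b_{-i})$, which by hypothesis is strictly larger; this is a profitable deviation from the truthful profile, contradicting SIC. (One must note that $P_i^{S_i^\star}(b_{-i})$ is genuinely finite here, which holds because $A_i(b) = S_i^\star$ witnesses $b_{-i} \in W_{-i}^{S_i^\star}$.)

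For part~(2), again suppose $A_i(b) = T_i$ but $b_i(S_i) - P_i^{S_i}(b_{-i}) > b_i(T_i) - P_i^{T_i}(b_{-i})$ for some $S_i \ne T_i$. In particular the right-hand side is finite, and the strict inequality forces $P_i^{S_i}(b_{-i}) < \infty$, so $b_{-i} \in W_{-i}^{S_i}$ and there exists $b_i'$ with $A_i(b_i',b_{-i}) = S_i$. Deviating to $b_i'$ raises agent $i$'s true-valuation utility, again contradicting SIC. Parts~(1) and~(2) together say that the allocation rule is (essentially) the pointwise maximizer of $T \mapsto b_i(T) - P_i^{T}(b_{-i})$, with ties the only source of slack.

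The substantive part is part~(3), where the spite-freeness (not just IC) is used. Fix the two agents, say $i$ and $j$, and fix a non-empty $T_i \subseteq S$; I want to bound the number of distinct values taken by $b_{-i} = b_j \mapsto P_i^{T_i}(b_j)$ as $b_j$ ranges over $W_j^{T_i}$. The key observation is that when $A_i = T_i$, the other agent $j$ can only receive a bundle inside the complement $S \setminus T_i$, which has $2^{K-|T_i|}$ subsets; so $j$'s allocation is one of at most $2^{K-|T_i|}$ bundles. I would then show that the payment $P_i^{T_i}(b_j)$ is determined by $j$'s realized bundle: if $b_j$ and $b_j'$ both lie in $W_j^{T_i}$ and induce the same bundle for agent $j$ (at the respective truthful profiles where agent $i$ reports a value making $T_i$ optimal), then $P_i^{T_i}(b_j) = P_i^{T_i}(b_j')$. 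This is where I expect the main obstacle to lie, and it is exactly a spite argument: if the two payments differed, agent $j$ — whose own allocation and hence own payoff is unchanged between the two situations (by part~(2)/Lemma~\ref{lem:payment} applied to $j$, her payment depends only on her bundle and on $b_i$, which we hold fixed) — could deviate from the lower-payment report to the higher-payment report, keeping her own utility fixed while strictly increasing agent $i$'s payment, hence strictly decreasing agent $i$'s payoff. That is precisely a forbidden spiteful deviation, so the payments must agree. Since there are at most $2^{K-|T_i|}$ possible bundles for $j$, the range of $P_i^{T_i}$ has at most that many values.

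The delicate points to get right in part~(3) are: (a) constructing, for a given $b_j \in W_j^{T_i}$, an actual truthful profile — i.e.\ choosing agent $i$'s valuation $b_i$ so that $A_i(b_i,b_j) = T_i$ and the profile is the "true" one at which we invoke SIC — and checking agent $j$'s resulting payoff is pinned down by her bundle alone; (b) ensuring that when agent $j$ switches from $b_j$ to $b_j'$ the allocation to agent $i$ stays $T_i$ (so that agent $i$'s payoff really does drop by the payment increase), which should follow because the bundle agent $i$ receives, being the complement side, is governed by parts~(1)--(2) and agent $i$'s report and payments are held fixed; and (c) handling the tie-breaking carefully so that "inducing the same bundle for agent $j$" is well-defined. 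Modulo these bookkeeping issues, the counting bound $2^{K-|T_i|}$ falls out immediately.
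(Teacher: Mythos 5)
Your proposal is correct and follows essentially the same route as the paper: parts (1) and (2) are the paper's profitable-deviation arguments verbatim, and part (3) rests on the same key step --- when the full allocation (hence agent $j$'s bundle and, by Lemma~\ref{lem:payment}, her own payment) is unchanged, any difference in agent $i$'s payment would give agent $j$ a spiteful deviation, so the payment is pinned down by the at most $2^{K-|T_i|}$ residual bundles. The bookkeeping issues you flag (notably choosing a single $b_i$ that secures $T_i$ against both $b_j$ and $b'_j$) are likewise left implicit in the paper's own argument.
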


Theorem~\ref{thm:MultipleChar} establishes the structure that any IR and SIC mechanism must satisfy.  
Parts~(1) and~(2) are natural extensions of Theorem~\ref{thm:charecterization}.
Together they imply that the allocation rule must coincide with the agent’s utility-maximizing bundle under the induced payment schedule, effectively embedding a choice-theoretic consistency condition into the mechanism.  
Part~(3) adds a combinatorial restriction: in two-agent environments, the number of distinct attainable payments is bounded by the number of possible residual allocations of the remaining items.  
This bound reflects the tight informational linkage between agents’ allocations and payments, a feature that highlights the limited degrees of freedom available in designing multi-item SIC mechanisms.

Parts~(1) and~(2) of Theorem~\ref{thm:MultipleChar} provide a geometric interpretation of the allocation rule.  
For each agent~$i$, every possible allocation $T_i \subseteq S$, and every fixed profile of other agents’ bids $b_{-i}\in V_{-i}$, define the set
\[
V_i^{T_i}(b_{-i}) = 
\Big\{\,b_i \in V_i : b_i(T_i) - P_i^{T_i}(b_{-i}) > b_i(S_i) - P_i^{S_i}(b_{-i})
\ \text{for all}\ S_i \subseteq S,\ S_i \neq T_i \Big\}.
\]
Theorem~\ref{thm:MultipleChar} implies that for any $b_i \in V_i^{T_i}(b_{-i})$, we have $A_i(b_i,b_{-i}) = T_i$.  
Thus, for a fixed $b_{-i}$, the sets $\{V_i^{T_i}(b_{-i})\}_{T_i\subseteq S}$ partition the space of bids $V_i$ into regions in which each allocation is strictly optimal.  
Each such region is a convex polytope, and these polytopes are pairwise disjoint except possibly along their boundaries.  
The theorem therefore determines the allocation for all bids lying in the interior of one such polytope.  
However, on the boundaries, the allocation may correspond to any of the sets whose regions meet at that point, as Part~(2) indicates.  
We illustrate this with the following example.

\begin{example}[two items, two agents, additive bids]\rm 
Let $S=\{a_1,a_2\}$,
and let $M = (A,P)$ be an IR and SIC mechanism.
By Theorem~\ref{thm:MultipleChar},
agent~$1$'s payment depends on the bundle she gets and on agent~$2$'s bid.
Suppose that for a fixed bid $b_2$ of agent~$2$,
\[
P_1^{\{a_1\}}(b_2)=4,\qquad P_1^{\{a_2\}}(b_2)=3,\qquad P_1^{\{a_1,a_2\}}(b_2)=6,\qquad P_1^{\emptyset}(b_2)=0.
\]
Suppose moreover that 
agent $1$’s valuation is additive: $v_i(\{a_1\})=x,\ v_i(\{a_2\})=y,\ v_i(\{a_1,a_2\})=x+y$ with $x,y\ge 0$.  
For each bundle $T\subseteq S$, 
agent~$1$'s utility from obtaining bundle $T$, denoted $U_1^T = v_i(T)-P_i^{T}$, is, then,
\[
U_1^{\{a_1\}}=x-4,\qquad U_1^{\{a_2\}}=y-3,\qquad U_1^{\{a_1,a_2\}}=x+y-6,\qquad U_{\emptyset}=0.
\]

By Parts~(1)–(2) of Theorem~\ref{thm:MultipleChar}, the region 
$V_1^T(b_2)$ in which $A_1(b_1,b_2)=T$ is the set of $(x,y)$ where $U_1^T$ strictly exceeds all other $U_1^{T'}$. 
These regions are intersections of half-spaces (hence convex polyhedrons), meeting on shared boundaries where ties occur.

\begin{itemize}
    \item So that $A_1(b_1,b_2) = \{a_1\}$ we need $U_1^{\{a_1\}}>\max\{U_1^{\{a_2\}},U_1^{\{a_1,a_2\}},U_1^{\emptyset}\}$.
    Hence, $V_1^{\{a_1\}}(b_2)=\{(x,y):\ x>4,\ y<2,\ x-y>1\}$.
    \item  So that $A_1(b_1,b_2) = \{a_2\}$ we need $U_1^{\{a_2\}}>\max\{U_1^{\{a_1\}},U_1^{\{a_1,b_2\}},U_1^{\emptyset}$\}.
    Hence, $V_1^{\{a_2\}}(b_2)=\{(x,y):\ y>3,\ x<3,\ y>x-1\}$.
    \item So that $A_1(b_1,b_2) = \{a_1,a_2\}$ we need $U_1^{\{a_1,a_2\}}>\max\{U_1^{\{a_1\}},U_1^{\{a_2\}},U_1^{\emptyset}\}$.
    Hence, $V_1^{\{a_1,a_2\}}(b_2)=\{(x,y):\ x>3,\ y>2,\ x+y>6\}$.
\end{itemize}

The boundaries of these regions represent bid profiles where there are ties: the lines $y=2$ (between $\{a\}$ and $\{a,b\}$), $x=3$ (between $\{b\}$ and $\{a,b\}$), and $x+y=6$ (between $\{a,b\}$ and $\emptyset$),
see Figure \ref{fig:regions_ab}.
By Part~(2), on the boundary, the allocation can be any of the tied sets. 
Outside these regions, $\emptyset$ is weakly optimal ($U_{\emptyset}=0$), consistent with IR.
\end{example}

\begin{figure}[th!]
    \centering
\begin{tikzpicture}
  \begin{axis}[
    width=9cm,height=9cm,
    xmin=-0.4,xmax=8,ymin=-0.4,ymax=8,
    axis lines=middle,
    axis line style={->,line width=1.2pt},   
    tick style={line width=0.9pt},
    major tick length=3pt,
    xlabel={$x$}, ylabel={$y$},
    xlabel style={at={(axis description cs:1,0.03)},anchor=north east, font=\small},
    ylabel style={at={(axis description cs:0.03,1)},anchor=south west, font=\small},
    xtick={0,1,2,3,4,5,6,7}, ytick={0,1,2,3,4,5,6,7},
    ticklabel style={font=\small},
    clip=false
  ]

    \path[fill=blue!60,opacity=.40]  (axis cs:4,0) rectangle (axis cs:8,2);

    \path[fill=red!60,opacity=.40]   (axis cs:0,3) rectangle (axis cs:3,8);

    \path[fill=green!60,opacity=.40] (axis cs:4,2) rectangle (axis cs:8,8);
    \path[fill=green!60,opacity=.40]
      (axis cs:3,8) -- (axis cs:3,3) -- (axis cs:4,2) -- (axis cs:4,8) -- cycle;

    \node[blue!75!black]  at (axis cs:6,0.7) {$V_1^{\{a_1\}}(b_2)$};
    \node[red!75!black]   at (axis cs:1.4,5.2) {$V_1^{\{a_2\}}(b_2)$};
    \node[green!60!black] at (axis cs:6,5.0) {$V_1^{\{a_1,a_2\}}(b_2)$};
    \node at (axis cs:1.4,0.7) {$V_1^{\emptyset}(b_2)$};

    \draw[dashed, thick] (axis cs:4,0) -- (axis cs:4,2);   
    \draw[dashed, thick] (axis cs:3,0) -- (axis cs:3,8);   
    \draw[dashed, thick] (axis cs:0,2) -- (axis cs:8,2);   
    \draw[dashed, thick] (axis cs:0,3) -- (axis cs:3,3);   
    \addplot[black, dashed, thick, domain=3:4, samples=2] {6 - x}; 

  \end{axis}
\end{tikzpicture}
    \caption{The partition of the valuation space into the sets $(V_1^T(b_2))_{T \subseteq \{a_1,a_2\}}$ for a given $b_2$.}
    \label{fig:regions_ab}
\end{figure}

\subsection{Summary}

Spiteful behavior, while not common,  does occur in practice.
The literature studied spiteful behavior by altering the utility functions of the agents: an agent utility function is the weighted difference between the gain of the agent and the gain of her competitors.
This approach implicitly assumes that an agent is willing to bear a loss if this will cause her competitors to suffer a much greater loss.

However, shareholders are usually not willing to suffer a loss just to inflict a larger loss on their competitors.
This makes the agents' utility lexicographic: agents would like to maximize their gain, and contingent upon their own gain, they may seek to lower their competitors' gains.
As mentioned in Section~\ref{Section - the model}, the supplementary phase of combinatorial clock auctions gives bidders an opportunity to exercise such behavior.

In the single-item setup, threshold mechanisms are the only IR and SIC mechanisms.
We hope that our study will lead to the complete characterization of IR and SIC mechanisms in the multi-item setup, as well as to the identification of the optimal one.

\bibliographystyle{aer}
\bibliography{./references.bib}

\newpage

\appendix

\begin{appendices}

\section{A notion of Extreme Spitefulness and the proof of Theorem \ref{thm:charecterization}} \label{sec:ExtremeSpiteAppendix}

To prove our main result, we require an additional spitefulness notion, referred to as \emph{Extreme Spite-Free Nash equilibrium} (ESNE) which allows agents to deviate if the deviating agent maintains the same utility level, while strictly reducing the payoffs of some other agents, but not necessarily all of them.

\begin{definition}
A profile $b$ is an \emph{Extreme Spite-Free Nash equilibrium} \emph{(ESNE)} if, for any agent $i$, there is no bid profile $b'=(b'_i,b_{-i})$ such that either $\UM_i(b';b_i) > \UM_i(b;b_i)$, or $\UM_i(b';b_i) = \UM_i(b;b_i)$ and $\UM_j(b';b_j) < \UM_j(b;b_j)$ for at least one agent $j\neq i$.
A mechanism is \emph{Extreme Spite-Free Incentive Compatible (ESIC)} if the profile $b=v$ of truthful bids is an \emph{ESNE}.
\end{definition}

Notice that an ESIC mechanism is also a SIC mechanism, and in case of two agents ($n=2$), the two spite-free notions of SIC and ESIC coincide.

The proof of Theorem~\ref{thm:charecterization} consists of three steps:
\begin{itemize}
\item Every threshold mechanism is IR and ESIC (and therefore also SIC) (Lemma~\ref{lemma:IR}).
\item Every IR and ESIC mechanism is a threshold mechanism (Theorem~\ref{thm:ESIC-Char}).
\item Every IR and SIC mechanism is ESIC (Theorem~\ref{thm:SIC}).
\end{itemize}

The proof yields a stronger property of IR and SIC mechanisms:
they are in fact ESIC.

An important concept we will need is that of single perturbation paths,
which is
a sequence of bid profiles $(b^r)_{r=0,1,\ldots,n}$ such that
$b^r$ and $b^{r+1}$ coincide in all coordinate except possibly the bid of agent $r+1$, for each $0 \leq r \leq n-1$.
Formally,

\begin{definition}
A sequence of bid profiles $(b^r)_{r=0,1,\ldots,n}$ is called a \emph{single perturbation path} if for each $r=0,1,\ldots,n-1$, we have $b^r_k = b^{r+1}_k$ for each $k \neq r+1$.  
\end{definition}

\subsection{Threshold mechanisms are IR and ESIC}

In this section we prove the following result.

\begin{lemma}
\label{lemma:IR}
Every threshold mechanism is \emph{IR} and \emph{ESIC}.
\end{lemma}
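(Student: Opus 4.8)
The plan is to verify the two properties---IR and ESIC---separately, working directly from Definition~\ref{def:Threshold}. Fix a threshold mechanism with priority ranking $R$ and thresholds $(t_i)_{i\in I}$.

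\textbf{Individual rationality.} I would argue that whenever agent $i$ receives the item, her payment is exactly $t_i$ and her bid $b_i$ satisfies $b_i \ge t_i$. This is immediate from both bullets of Definition~\ref{def:Threshold}: in the first bullet the winner $i=\arg\min\{R(j):b_j\ge t_j\}$ satisfies $b_i\ge t_i$ and $P_i(b)=t_i$; in the second bullet, any agent allocated the item has $b_i=t_i$ and pays $t_i$. Hence if agent $i$ bids truthfully, $b_i=v_i$, and wins, her utility is $v_i-t_i=b_i-t_i\ge 0$; if she does not win, by Lemma~\ref{lemma:simple properties}(\ref{lem:ZeroPayment}) (or directly, since only the winner pays) her utility is $0$. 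Either way $U_i(b;v_i)\ge 0$, so the mechanism is IR.

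\textbf{ESIC.} Fix a valuation profile $v$ and set $b=v$; I must show $b$ is an ESNE. Take any agent $i$ and any deviation $b'=(b'_i,b_{-i})$. First, the no-profitable-deviation part: I claim $U_i(b';v_i)\le U_i(b;v_i)$. Since the other agents' bids are held fixed at $b_{-i}=v_{-i}$, agent $i$'s winning status is governed solely by whether $b'_i\ge t_i$ relative to the fixed competition; if some agent $j\neq i$ with $R(j)<R(i)$ already has $v_j>t_j$, then agent $i$ cannot win regardless of $b'_i$, so $U_i=0$ both before and after, and truthful bidding also gives $0$ (as $v_i\le t_i$ would be needed to win, but $i$ is blocked anyway). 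Otherwise agent $i$ can win precisely by bidding at least $t_i$, in which case she pays $t_i$; so her attainable payoffs are $\{0\}\cup\{v_i-t_i\}$ (the value $0$ always attainable, $v_i - t_i$ attainable by bidding $b'_i$ high enough, and possibly attainable at the tie $b'_i=t_i$ depending on the tie-breaking rule). Truthful bidding $b_i=v_i$ already selects the better of these: if $v_i>t_i$ she wins and gets $v_i-t_i>0$; if $v_i<t_i$ she does not win and gets $0$; if $v_i=t_i$ both candidate payoffs equal $0$. So no deviation strictly increases $U_i$.

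\textbf{The spite-free part}, which is the real content: suppose $U_i(b';v_i)=U_i(b;v_i)$; I must show no agent $j\neq i$ has $U_j(b';v_j)<U_j(b;v_j)$. Distinguish cases on whether agent $i$ is the winner under $b$. If $i$ does not win under truthful play, then her equilibrium payoff is $0$, so the equal-payoff deviations $b'$ are exactly those under which she still gets $0$ --- either she still loses, or (at a tie) she wins but pays $t_i=v_i$. In the ``still loses'' subcase, I claim the winner and all payments are unchanged: the set $\{k\neq i: v_k> t_k\}$ determining the first bullet's winner is untouched, and if that set is empty the second bullet's outcome among agents $k$ with $v_k=t_k$ is likewise untouched since $i\notin$ that set (as $i$ loses under $b$, and by the tie-breaking rule being a fixed function of the bid profile, changing only $b_i$ from $v_i$ to $b'_i$ with $b'_i\neq t_i$ cannot bring $i$ into contention, and if $b'_i=t_i$ we are in the other subcase). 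Actually here I should be careful: if $b'_i = t_i$ with $v_i = t_i$ strictly, that's a tie subcase handled next. In the ``wins at a tie'' subcase, $v_i = t_i$, agent $i$ now takes the item, so the previous winner (if any) may lose it --- but that previous winner $j$ was getting the item at price $t_j$ with $v_j\ge t_j$; now she gets $0$. If $v_j>t_j$ this would be a strict loss for $j$ --- however, this cannot happen, because if $v_j>t_j$ then $j$ is in the first-bullet set and agent $i$, who is in the tie situation $b'_i=t_i$, can only win via the first bullet if $R(i)$ is smallest among $\{k:b'_k\ge t_k\}$, which still contains $j$; so $i$ wins only if $R(i)<R(j)$, but then $i$ would have won under truthful play too (since $v_i=t_i\ge t_i$ puts $i$ in the first-bullet set already), contradicting that $i$ loses under $b$. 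Hence the only agents who can lose the item are those with $v_j=t_j$, i.e., those whose equilibrium payoff was already $0$, so they suffer no strict loss. The remaining case is that agent $i$ is the winner under $b$; then her equilibrium payoff is $v_i-t_i$, and an equal-payoff deviation keeps this, so agent $i$ must still be the winner (the only other attainable payoff is $0\neq v_i-t_i$ unless $v_i=t_i$, in which case $0=v_i-t_i$ and a losing deviation is possible --- but then some other agent $k$ may gain the item, strictly increasing $U_k$, which is permitted under ESNE). If $i$ remains the winner, the allocation is identical and all payments (all zero except $P_i=t_i$) are identical, so no other agent's payoff changes at all. In every case, no agent $j\neq i$ is made strictly worse off, so $b$ is an ESNE, and the mechanism is ESIC (hence SIC).

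I expect the spite-free part to be the main obstacle, specifically the bookkeeping around the tie-breaking rule in the second bullet: one must check that when agent $i$'s deviation changes her own winning status ``for free'' (because $v_i=t_i$), every other agent who consequently loses the item was also earning zero. The key structural fact making this work is that a threshold mechanism never lets an agent with a strictly positive margin $v_j-t_j>0$ lose the item to a lower-priority agent, since such an agent always belongs to the first-bullet set and priority is fixed.
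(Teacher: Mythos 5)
Your proof is correct and follows essentially the same route as the paper's: verify IR directly from Definition~\ref{def:Threshold}, note that any win costs exactly $t_i$ so truthful bidding already attains the best of the attainable payoffs $\{0, v_i - t_i\}$, and observe that displacing a winner with strictly positive margin $v_j - t_j > 0$ would require a higher-ranked deviator with $v_i < t_i$ to bid at least $t_i$ and thereby incur a strict loss. Your bookkeeping around the second-bullet tie-breaking is more detailed than the paper's (which simply asserts that no deviation of a lower-ranked agent can change the winner's identity), but the substance and the key structural fact are the same.
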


\begin{proof}
Let $M$ be a threshold mechanism with permutation $R$ and thresholds $(t_i)_{i \in I}$.

$M$ is IR because the item can be allocated only to an agent~$i$ whose bid is at least $t_i$,
and if agent~$i$ obtains the item, she pays $t_i$.

We next prove that $M$ is IC.
Fix a profile of private values $v$,
and suppose the agents bid truthfully, that is, $b=v$.
If $A(b) = i$,
then necessarily $v_i \geq t_i$,
and agent~$i$'s utility is $v_i - t_i \geq 0$.
By deviating,
either agent~$i$ still get the item and pays $t_i$,
or does not get the item.
In both cases agent~$i$ does not profit.

If $A(b) \neq i$,
then agent~$i$'s utility is $0$.
If agent~$i$ has a profitable deviation,
then necessarily $v_i > t_i$.
But since $A(b) \neq i$,
this means that $A(b) = j \neq i$
and $j$ is ranked higher than $i$,
so that no deviation of $i$ can change the identity of the winner, a contradiction.

We finally prove that $b=v$ is an SNE.
Suppose by contradiction that agent~$i$ has a deviation that harms agent~$j$.
Since $M$ is IR, 
agent $j$ can be harmed only if
at $b=v$ agent~$j$ wins the item, and her payoff is positive.
Agent~$i$ can affect the identity of the winner only if she is ranked higher than $j$.
Since $A(v) = j$,
necessarily $v_i < t_i$.
Therefore a deviation of agent~$i$ that affects the identity of the winner is to a bid $b'_i \geq t_i$,
in which case agent~$i$ will obtain the item and her utility will be $v_i - t_i < 0$.
\end{proof}

\subsection{Winner's payment in ESIC mechanisms}

In this section we
show that, given an IR and ESIC mechanism, and conditional on a winning agent $i$, the payment of agent $i$ is independent of the bid profile.
This is the analog of Lemma~\ref{lemma:simple properties}(\ref{lem:equalPayments}) in the multi-item setup.

\begin{lemma}\label{lem:stairESIC}
Fix an \emph{IR} and \emph{ESIC} mechanism $M$.
If $A(b) = A(b') = i$ for some agent $i \in I$ and bid profiles $b,b' \in V^n$, then $P_i(b) = P_i(b')$.
\end{lemma}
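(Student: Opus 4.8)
The plan is to show that if $A(b)=A(b')=i$ but $P_i(b)\ne P_i(b')$, say $P_i(b)<P_i(b')$, then some agent can spitefully deviate at the higher-payment profile, contradicting ESIC. The subtlety compared to Lemma~\ref{lem:payment} (the multi-item analog) is that the bid profiles $b$ and $b'$ may differ in \emph{many} coordinates, not just agent $i$'s, so we cannot simply appeal to a single unilateral deviation of agent~$i$. Instead, I would connect $b$ to $b'$ along a single perturbation path and track how the winner and the winner's payment change coordinate by coordinate.

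First, fix private valuations equal to the bid profile we start from and recall that ESIC means truthful bidding is an ESNE; so for \emph{every} valuation profile $v$, no agent $j$ has a unilateral deviation from $v$ that either raises $u_j$ or leaves $u_j$ unchanged while strictly lowering some other agent's utility. Next, consider a single perturbation path $(b^r)_{r=0,\dots,n}$ with $b^0=b$ and $b^n=b'$, where at step $r$ we replace coordinate $r$ of the current profile by the corresponding coordinate of $b'$. At each step only agent $r$'s bid changes. I would use Lemma~\ref{lemma:simple properties}(\ref{lem:equalPayments}) (equal payments conditional on a fixed allocation, which holds under IC hence under ESIC) to argue that as long as the \emph{allocation} $A(\cdot)=i$ is unchanged along a step, agent~$i$'s payment is unchanged along that step; so the payment can only change at a step where the identity of the winner changes. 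Combining this with Lemma~\ref{lemma:simple properties}(\ref{lem:ZeroPayment}) (a non-winner pays $0$), I would reduce to the core case: two profiles differing in a single coordinate, with agent~$i$ winning in both, and distinct payments --- but that is exactly Lemma~\ref{lemma:simple properties}(\ref{lem:equalPayments}), giving a contradiction. The only remaining possibility is that the winner's identity genuinely oscillates along the path; here I would instead argue directly by a spiteful deviation: at the profile with the higher payment $P_i(b')$, let the deviating agent be whichever agent's coordinate differs between $b$ and $b'$ at the step where the payment jumped, and have that agent deviate toward its $b$-coordinate. This keeps the deviator's own utility at $0$ (they are a non-winner, by IR they pay $0$ before and after, and they do not win) while strictly lowering the winner $i$'s utility (from $v_i-P_i(b')$ toward $v_i-P_i(b)>v_i-P_i(b')$), contradicting ESNE --- unless the deviator's coordinate is agent~$i$ itself, which is again covered by part~(\ref{lem:equalPayments}).

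The main obstacle is handling the case where the winner's identity changes along the perturbation path: one must rule out that agent~$i$ loses at an intermediate profile and then regains the item with a different payment, in a way that no single unilateral deviation exposes as spiteful. I expect the clean way around this is to choose the perturbation path carefully --- perturbing first all coordinates other than $i$ (so agent~$i$'s bid is held fixed throughout, and we land at the profile $(b_i, b'_{-i})$), then perturbing coordinate $i$ last --- and to show that along the first phase agent~$i$ must keep winning (otherwise IR plus the monotonicity Lemma~\ref{lemma:simple properties}(\ref{lem:MonoForSingleItem}) and a spite argument force a contradiction), so that part~(\ref{lem:equalPayments}) applies phase by phase. The bookkeeping of which deviations are available at which intermediate profile is where the real work lies; everything else is a direct invocation of Lemma~\ref{lemma:simple properties}.
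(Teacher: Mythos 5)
Your overall architecture---normalize the two profiles, walk along a single perturbation path, and look for a spiteful unilateral deviation at a step where something changes---matches the paper's, but there are two problems, the second of which is a genuine gap at exactly the point you flag as ``where the real work lies.'' First, Lemma~\ref{lemma:simple properties}(\ref{lem:equalPayments}) does not do what you ask of it: it equates agent $i$'s payments across two profiles that differ only in \emph{agent $i$'s own} coordinate. At a path step where some agent $r\neq i$ changes her bid and $i$ keeps winning, that lemma says nothing about $P_i$; you need a fresh spite argument there (agent $r$, a non-winner with utility $0$ before and after by Lemma~\ref{lemma:simple properties}(\ref{lem:ZeroPayment}), deviates at the profile where $P_i$ is \emph{lower} so as to raise $i$'s payment). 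Note also that your stated deviation runs the wrong way: moving $i$'s utility ``from $v_i-P_i(b')$ toward $v_i-P_i(b)>v_i-P_i(b')$'' is an increase, not a decrease, so it is not spiteful.

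Second and more seriously, your resolution of the hard case---``along the first phase agent $i$ must keep winning, otherwise IR plus monotonicity and a spite argument force a contradiction''---is asserted, not proved, and it cannot be established by step-local arguments. It is consistent with IR, IC, and ESIC at a single step that the winner switches from $i$ to the very agent $r$ whose bid just increased, with $r$ paying a price strictly between her old and (weakly below) her new bid: then $r$'s utility is $0$ before and after, the forward deviation strictly lowers $r$'s \emph{own} utility (she would pay more than her true value, so it is not an admissible spiteful deviation), and the backward deviation raises $i$'s utility (so it is not spiteful either). No single step exposes a violation. The paper closes this with a \emph{global} counting argument over the whole path: it first replaces $b_i$ and $b'_i$ by a common high bid so that agent $i$'s coordinate is constant along the path, then compares the set $I_1$ of steps at which the utility vector changes with the set $I_2$ of agents whose utility changes somewhere along the path. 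A two-to-one incidence count (each step alters at most two agents' utilities, since at most one agent has positive utility at any profile; and each agent other than $i$ in $I_2$ must have her utility altered at least twice, since it is $0$ at both endpoints) gives $|I_1|\ge|I_2|$, and since $i\in I_2\setminus I_1$ there must exist $j\in I_1\setminus I_2$---an agent whose own step changes someone else's utility while her own utility is unchanged, which does yield a spiteful deviation. That counting step is the heart of the proof and is absent from your proposal.
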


\begin{proof}
The proof comprises three steps.

\noindent\textbf{Step 1:} Definitions.

Assume to the contrary that there exists an agent $i \in I$ and bid profiles $b,b' \in V^n$, such that $A(b) = A(b') = i$ and $P_i(b) > P_i(b')$.
Define two bid profiles $c = (c_i,b_{-i})$ and $c'=(c_i',b'_{-i})$, where $c_i=c_i' = \max\{b_i,b_i'\} + \epsilon$ for some $\epsilon>0$.
By Lemma \ref{lemma:simple properties}(\ref{lem:MonoForSingleItem}), $A(c)=A(c')=i$, and by Lemma~\ref{lemma:simple properties}(\ref{lem:equalPayments}),
$\PM_i(c) = \PM_i(b)$ and $\PM_i(c') = \PM_i(b')$.
Therefore, $P_i(c) > P_i(c')$, and hence $u_i(c';c_i) > u_i(c;c_i)$.

Consider the single perturbation path $(c^r)_{r=0,\ldots,n}$, where $c^0 = c$ and $c^n = c'$.
Note that $P_i(c^0) = P_i(c) > P_i(c') = P_i(c^n)$,
and since $A(c^0) = A(c^n) = i$, 
we have $\UM_j(c^0;b_j) = \UM_j(c^n;b'_j) = 0$ for each agent $j \neq i$. 

Define a set of agents $I_1 \subseteq I$ as $I_1 = \{r: \UM(c^{r-1}) \neq \UM(c^{r})\}$, where $\UM(b) = (u_j(b;b_j))_{j \in I}  \in \dR^n$ is the utility vector when each agent's bid in the bid profile $b$ matches her true valuation, and a set $I_2 \subseteq I$ as $I_2 = \{j: \exists r \ {\text{such that}} \ \UM_j(c^{r};c^r_j) \neq \UM_j(c^{r+1};c^{r+1}_j)\}$.
Agent $i$'s bid is constant throughout the single perturbation path, and therefore $c^{i-1} = c^i$, which implies that $i \notin I_1$.
Since $u_i(c';c_i) > u_i(c;c_i)$, $i \in I_2$.

\bigskip
\noindent\textbf{Step 2:}
$|I_1| \ge |I_2|$.

Define a multi-valued function $F:I_1 \to I_2$ as follows.
For every $r \in I_1$ and every $j \in I_2$, 
set $j \in F(r)$ if and only if $\UM_{j}(c^{r-1};c^{r-1}_j) \neq \UM_{j}(c^{r};c^{r}_j)$. 
By Lemma~\ref{lemma:simple properties}(\ref{lem:ZeroPayment}), at most one agent has a strictly positive utility in any bid profile, 
and hence $|F(r)| \le 2$ for every $r \in I_1$. 
For any $j \in I_2$ such that $j \neq i$, we have $\UM_j(c^{0};b_j) = \UM_j(c^{n};b_j') = 0$ and $\UM_j(c^{r};c^r_j) > 0$ for some $1 \le r < n$, hence $|F^{-1}(j)| \ge 2$. 
In addition, the condition $\UM_i(c^{0};c_i) < \UM_i(c^{n};c_i)$ implies that $|F^{-1}(i)| \ge 1$.
Therefore,
$$
2|I_1| \geq \sum_{r\in I_1}|F(r)| = \sum_{j\in I_2} |F^{-1}(j)| \geq 1 + 2(|I_2|-1) = 2|I_2|-1,
$$
where the first inequality holds since $|F(r)| \le 2$ for every $r \in I_1$,
and the second inequality holds since $|F^{-1}(i)| \ge 1$
and $|F^{-1}(j)| \ge 2$ for every $j \in I_2 \setminus \{i\}$.
Thus, $|I_1| \ge |I_2|$, as claimed.

\bigskip
\noindent\textbf{Step 3:} Deriving a contradiction.

Because $i \notin I_1$ and $i \in I_2$, there exists an agent $j$ such that $j \in I_1$ and $j \notin I_2$.
Since $j \notin I_2$, we have $\UM_{j}(c^{j-1};c^{j-1}_j) = \UM_{j}(c^{j};c^{j}_j)$,
and since $j \in I_1$, we have $\UM_{j'}(c^{j-1};c^{j-1}_{j'}) \neq \UM_{j'}(c^{j};c^{j}_{j'})$ for some agent $j'$.
Hence, at either $c$ or $c'$, agent $j$ can misreport her bid to decrease the utility of agent $j'$ while keeping her own utility fixed.
This contradicts the ESIC property and concludes the proof.
\end{proof}

\subsection{Characterization of IR and ESIC mechanisms}

In this subsection, we characterize the set of IR and ESIC mechanisms.

\begin{theorem}\label{thm:ESIC-Char}
Every \emph{IR} and \emph{ESIC} mechanism is a threshold mechanism. 
\end{theorem}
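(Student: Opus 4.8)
\textbf{Proof plan for Theorem~\ref{thm:ESIC-Char}.}
The plan is to recover from an arbitrary IR and ESIC mechanism $M=(A,P)$ both the priority ordering $R$ and the thresholds $(t_i)$, and then verify that $M$ behaves as the associated threshold mechanism on every bid profile. The starting point is Lemma~\ref{lem:stairESIC}: conditional on agent~$i$ winning, her payment is a constant $t_i \in \dR_+$ independent of the bid profile. Define $t_i = \infty$ if agent~$i$ never wins under any bid profile. By IR (Lemma~\ref{lemma:simple properties}(\ref{lem:ZeroPayment})), a non-winning agent pays $0$, so the payment function is already pinned down once we know the allocation and the constants $(t_i)$; the whole burden is therefore to show that the allocation function is the threshold allocation for some permutation $R$.

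The first substantive step is to understand, for a fixed profile $b_{-i}$ of the others, the "winning set" $\{b_i : A(b_i,b_{-i}) = i\}$ of agent~$i$. By Lemma~\ref{lemma:simple properties}(\ref{lem:MonoForSingleItem}) this set is upward closed, hence an interval $[\tau_i(b_{-i}),\infty)$ or $(\tau_i(b_{-i}),\infty)$ for some threshold $\tau_i(b_{-i})$. I would next argue that whenever this set is nonempty, its left endpoint equals $t_i$ — i.e. agent~$i$ wins essentially exactly when $b_i \ge t_i$, provided no one "ahead of her" wins. One direction is IC: if $b_i = v_i > t_i$ and agent~$i$ does not win, while some profitable deviation to a higher bid would make her win at price $t_i$, that violates IC, unless some other agent~$j$ wins at $(v_i,b_{-i})$ and agent~$i$ cannot displace $j$ by any unilateral deviation. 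The other direction — that agent~$i$ cannot be forced to pay, or be denied the item, in a way inconsistent with the flat threshold $t_i$ — again uses Lemma~\ref{lem:stairESIC} together with IR. This is where I expect to define, for each $b_{-i}$, the set of agents who "block" $i$, and to show it is determined by $b_{-i}$ in a consistent, transitive way.

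The crux is extracting a single global permutation $R$ out of these pairwise blocking relations. The natural approach: say $j \succ i$ ("$j$ has priority over $i$") if there is some profile where $b_j \ge t_j$, $b_i \ge t_i$, and $j$ wins. Using single perturbation paths (raising one agent's bid at a time from below his threshold to above it) together with ESIC and Lemma~\ref{lem:stairESIC}, I would show this relation is total and transitive, hence a linear order; the key point is that if agent~$i$ is losing only because some higher-priority agent~$j$ is winning, then $i$ must be unable to harm $j$ by any deviation, which forces the relation to be independent of the remaining bids and of third agents. Antisymmetry/totality should follow from considering a profile where exactly two agents bid above threshold and asking who wins — ESIC forbids the loser from having a payoff-preserving harmful deviation against the winner, which pins the answer down uniformly. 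Once $R$ is in hand, I verify the two bullets of Definition~\ref{def:Threshold}: if some $b_j > t_j$, the $R$-minimal such agent wins at price $t_j$ (from the interval characterization plus the blocking order), and if $b_j \le t_j$ for all $j$ then by IR and flatness of payments the only allowed outcomes are allocating to some agent with $b_j = t_j$ at price $t_j$, or no allocation.

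\textbf{Main obstacle.} The hard part is the global consistency of the priority order — ruling out "cyclic" or context-dependent blocking, i.e. showing that who beats whom among above-threshold bidders never depends on the other bids or on which third agents are present. I expect this to require a careful single-perturbation-path argument in the spirit of Lemma~\ref{lem:stairESIC}: start from a profile realizing one allocation, perturb agents' bids one at a time, and use the ESIC no-harmful-payoff-preserving-deviation condition at each step to propagate the identity of the winner, deriving a contradiction if two profiles with the same above-threshold set had different winners. Handling the boundary cases $b_j = t_j$ (ties) and $t_i = \infty$ (agents who never win) cleanly, without letting them break the order, is the fiddly part of the bookkeeping.
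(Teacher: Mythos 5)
Your plan follows essentially the same route as the paper's proof: extract the constant per-agent price $t_i$ from Lemma~\ref{lem:stairESIC}, define the priority order by asking who wins when a given set of agents bids above their thresholds, and establish global consistency of that order via a single-perturbation-path argument invoking ESIC at each step --- this last step is exactly the paper's Step~6, and the auxiliary facts needed to make it go through (monotonicity of winning when others lower their bids, preservation of zero utilities under such reductions, and that an unallocated item forces every bid weakly below its threshold) are the paper's Steps~3--5. The obstacle you flag as the crux is indeed where the paper spends most of its effort, and your sketched attack on it is the one the paper uses.
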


\begin{proof}
Let $M = (A,P)$ be an IR and ESIC mechanism.
For each agent $i$, define a function $T_i: V^{n-1} \to \dR_+ \cup \{\infty\}$ as follows.
For any profile $b_{-i} \in V^{n-1}$, define
\[ T_i(b_{-i}) = \inf \{b_i:A(b_i,b_{-i}) = i\} \]
to be the minimum bid of agent~$i$ that makes her win against $b_{-i}$;
If $A(b_i,b_{-i}) \neq i$ for all $b_i \in V_i$, then $T_i(b_{-i}) = \infty$.
By Lemma \ref{lemma:simple properties}(\ref{lem:MonoForSingleItem}), 
$A(b_i,b_{-i}) = i$ for every agent $i$, every bid profile $b_{-i}$ and any bid $b_i > T_i(b_{-i})$. 
The proof of Theorem~\ref{thm:ESIC-Char} is divided into eight steps:
\begin{itemize}
\item In Step~1 we prove that the winner at bid profile $b$ pays $T_i(b_{-i})$.
\item In Step~2 we prove that the amount the winner pays 
is some constant $t_i$, which depends only on her identity and not on the bid profile.
\item In Step~3 we prove that if at bid profile $b_{-i}$ agent $i$ has a bid that makes her win,
then for any bid profile $b'_{-i}$ that is dominated coordinatewise by $b_{-i}$, agent $i$ has a bid that makes her win.
\item In Step~4 we prove that if agent~$i$'s utility at bid profile $b$ is 0,
then agent~$i$'s utility is still 0 if agent $j \neq i$ lowers her bid.
\item In Step~5 we prove that if no agent wins, then the bid of each agent $i$ is at most $t_i$.
\item In Step~6 we prove that every pair of bid profiles $b$ and $b'$ that have the same set of agents who bid at least $t_i$,
such that at $b'$ at least one agent $j$ bids strictly more than $t_j$,
yield the same winner.
\item In Step~7 we define a priority ranking $R$ among the agents.
\item In Step~8 we finally prove that $M$ is a threshold mechanism with priority ranking $R$ and thresholds $(t_i)_{i \in I}$.
\end{itemize}

\bigskip
\noindent \textbf{Step 1:} 
For every agent $i$ and every bid profile $b$, if $A(b) = i$, then $P_i(b) = T_i(b_{-i})$.

Suppose first that $A(b) = i$ and $P_i(b) > T_i(b_{-i})$ for some agent $i$ and bid profile $b$.
Consider a bid $b'_i$ that satisfies $P_i(b) > b'_i > T_i(b_{-i})$.
As mentioned above, $A(b'_i,b_{-i}) = i$, and by Lemma~\ref{lemma:simple properties}(\ref{lem:equalPayments}), $P_i(b'_i,b_{-i}) = P_i(b_i,b_{-i})$.
Thus, $\UM_i((b'_i;b_{-i});b'_i) = b'_i - P_i(b'_i,b_{-i}) < 0$, which contradicts the IR condition.

Suppose now that $A(b) = i$ and $P_i(b) < T_i(b_{-i})$ for some agent $i$ and bid profile $b$.
Consider a bid $b'_i$ that satisfies $P_i(b) < b'_i < T_i(b_{-i})$.
Since $b'_i < T_i(b_{-i})$, we have $A(b'_i,b_{-i}) \neq i$.
However, $\UM_i(b;b'_i) = b'_i - P_i(b) > 0 = \UM_i((b'_i;b_{-i});b'_i)$, which contradicts the IC condition.
This concludes the proof of Step 1.
\hfill $\diamond$

\bigskip
\noindent \textbf{Step 2:} 
For every agent $i$, there exists $t_i \in \dR_+ \cup \{\infty\}$ such that for every bid profile $b_{-i} \in V^{n-1}$, either $T_i(b_{-i}) = t_i$ or $T_i(b_{-i}) = \infty$.

Assume, to the contrary, that there exists an agent $i$ and bid profiles $b_{-i}$ and $b'_{-i}$, 
such that $T_i(b_{-i}) < T_i(b'_{-i}) < \infty$.
Lemma~\ref{lemma:simple properties}(\ref{lem:MonoForSingleItem}) implies that there is a bid $b_i > \max\{T_i(b_{-i}), T_i(b'_{-i}) \}$ such that $A(b_i,b_{-i}) = A(b_i,b'_{-i}) = i$.
By Step 1, $P_i(b_i,b_{-i}) = T_i(b_{-i}) < T_i(b'_{-i}) = P_i(b_i,b'_{-i})$,
contradicting Lemma \ref{lem:stairESIC}.
\hfill $\diamond$
\bigskip

From now on we let $(t_i)_{i \in I}$ be the constants given by Step~2.
If $T_i(b_{-i}) = \infty$ for each $b_{-i} \in V^{n-1}$, or equivalently, if $A(b) \neq i$ for each $b \in V^n$, then we set $t_i = \infty$.
In other words, for any agent $i \in I$, we have $t_i \in \dR_+$ if and only if there exists a bid profile $b \in V^n$ such that $A(b) = i$.

\bigskip
\noindent \textbf{Step 3:} 
Fix a bid profile $b_{-i}$ such that $T_i(b_{-i}) = t_i < \infty$, and consider a bid profile $b'_{-i}$ such that $b'_j \le b_j$ for every agent $j \neq i$. 
Then, $T_i(b'_{-i}) = t_i$.

Fix a bid $b_i > t_i$, so that $A(b_i,b_{-i}) = i$, $P_i(b_i,b_{-i}) = t_i$, and $\UM_i((b_i,b_{-i});b_i) = b_i - t_i > 0$.
We will show that $A(b_i,b'_{-i}) = i$.
This will imply that $T_i(b'_{-i}) \leq T_i(b_{-i}) < \infty$,
and therefore
by Step~2 we will obtain that $T_i(b'_{-i}) = t_i$.

Assume then to the contrary that $A(b_i,b'_{-i}) \neq i$, and consider a single perturbation path $(b^r)_{r=0,1,\ldots,n}$, where $b^0 = (b_i,b_{-i})$ and $b^n = (b_i,b'_{-i})$.
Let $r$ be the minimal index such that $A(b^r) = i$ and $A(b^{r+1}) \neq i$.
By assumption,
$b^{r}_{r+1} = b_{r+1} \geq b'_{r+1} = b^{r+1}_{r+1}$.
If $A(b^{r+1}) = A(b'_{r+1},b^r_{-(r+1)}) = r+1$, then by Lemma \ref{lemma:simple properties}(\ref{lem:MonoForSingleItem}), $A(b^{r}) = A(b_{r+1},b^r_{-(r+1)}) = r+1$, contradicting the condition $A(b^r) = i$.
If $A(b^{r+1}) = j \notin \{r+1,i\}$, then given the bid profile $b^r$, agent $r+1$ can misreport $b'_{r+1}$, instead of $b_r$, thereby decreasing the utility of agent $i$. 
This contradicts the ESIC condition.
Hence, $A(b_i,b'_{-i}) = i$, and by Step $2$, 
$T_i(b'_{-i}) = t_i$.
This concludes the proof of Step $3$.
\hfill $\diamond$

\bigskip
\noindent \textbf{Step 4:} 
Let $b$ be a bid profile, let $j$ be an agent,
and let $b_j' < b_j$.
If $u_i(b;b_i)=0$ for every agent $i$, 
then $u_i((b'_j,b_{-j});b_i) = 0$ for every agent $i$ (including $i=j$).

Let $b$, $j$, and $b'_j$ be as in the claim,
and assume by contradiction that $u_i((b'_j,b_{-j});b_i) \neq 0$ for some agent $i$.
Since the mechanism is IR, it follows that $u_i((b'_j,b_{-j});b_i) > 0$.
By Lemma~\ref{lemma:simple properties}(\ref{lem:ZeroPayment}), $A(b')=i$, 
and hence by Lemma~\ref{lemma:simple properties}(\ref{lem:MonoForSingleItem}), $i\neq j$.
But then at the bid profile $b'$, agent~$j$ can misreport $b_j$, thereby lowering agent~$i$'s payoff while not affecting her own payoff.
This contradicts ESIC, and conclude the proof of Step~4.

\bigskip
\noindent \textbf{Step 5:} 
Let $b$ be a bid profile such that $A(b)=0$.
Then $b_i \leq t_i$ for every agent $i$.

Assume to the contrary that $A(b) = 0$ but there exists an agent $i$ such that $b_i > t_i$.
By the definition of $t_i$,
there exists a bid profile $b'$ such that $b_i'=b_i > t_i$ and $A(b')=i$.
Define the bid profile $c$ as follows:
$c_i=  b_i > t_i$ and $c_j < \min\{b_j,b'_j, t_j\}$ for every $j\neq i$; in case $\min\{b_j,b'_j,t_j\}=0$, set $c_j=0$.
Since $A(b) = 0$,
the IR condition implies that $u_i(b;b_i) = 0$ for every agent $i$.
By Step $4$, applied recursively for all agents $k$ such that $b_k > c_k$, we have $u_i(c;b_i)=0$.
Consider a single perturbation path $(b^r)_{r=0,1,\ldots,n}$ where $b^0 = b'$ and $b^n = c$.
Let $r$ be the first stage such that $A(b^r) = i$ and $A(b^{r+1}) \neq i$.
By the construction of $c$ and Step~1, 
$u_i(b^r;b_i) = b_i - t_i >0$ and $b^{r+1}_{r+1} = c_{r+1} < b'_{r+1} = b^r_{r+1}$.
Lemma~\ref{lemma:simple properties}(\ref{lem:MonoForSingleItem}) implies that $A(b^{r+1}) \not\in\{ i, r+1\}$.
Therefore, at the bid profile $b^r$, agent $r+1$ can misreport $c_{r+1}$ instead of $b^r_{r+1}$, and strictly decrease the utility of agent $i$ without affecting her own utility.
This violates the ESIC property, and concludes the proof of Step $5$.

\bigskip

For every bid profile $b$ define 
\[ I_b= \{j: b_j \geq t_j\} \subseteq I. \] 
This is the set of agents $j$ who bid at least $t_j$.

\bigskip
\noindent \textbf{Step 6:} 
For every two bid profiles $b$ and $b'$ such that 
(i) $I_b = I_{b'}$, 
(ii) $A(b)\neq 0$, and 
(iii) there exists an agent $j$ such that $b'_j>t_j$, 
we have $A(b)=A(b')$.

Assume to the contrary that the two bid profiles $b$ and $b'$ satisfy (i), (ii), and (iii)
but $i = A(b) \neq A(b')$.
By the definition of $t_i$, we have $i\in I_b = I_{b'}$.
By Step~5, 
$A(b') \neq 0$.
Denote $j = A(b')$.
By increasing $b_i$ and $b'_j$ if necessary and by Lemma \ref{lemma:simple properties}(\ref{lem:MonoForSingleItem}), we can assume w.l.o.g.~that $u_j(b';b'_j)= b'_j - t_j >0$ and $u_i(b;b_i)= b_i - t_i >0$.
We can also assume that $b_i'=b_i$, hence $u_j(b';b_j')= b_j - t_j >0$; indeed, otherwise, at bid profile $b'$ agent $i$ can misreport $b_i$, thereby decreasing the utility of agent $j$ without affecting her own utility.

Consider a single perturbation path $(b^r)_{r=0,1,\ldots,n}$ such that $b^0 = b$ and $b^{n} = b'$.
We will show that for each $r=0,1,\ldots,n-1$, 
if $A(b^r) = i$ then $A(b^{r+1}) = i$, which contradicts 
the facts that $A(b^0) = A(b) = i$ and
$A(b^n) = A(b') \neq i$.
Assume that there exists $r<n$ such that $A(b^r) = i$ and $A(b^{r+1}) \neq i$.
Since $b_i = b_i'$, we have $r+1 \neq i$.
Since $I_b = I_{b'}$, whether $r+1 \notin I_b$ or $r+1 \in I_b$, 
at the bid profile $b^r$,
when agent $r+1$ misreports $b_{r+1}^{r+1}$,
her utility does not decrease,
while the utility of agent $i$ strictly decreases (because $u_i(b^r;b_i)> 0 = u_i(b^{r+1};b_i)$), contradicting the ESIC property.
Thus, $A(b^{r+1}) = i$ as claimed.
This concludes the proof of Step $6$.
\hfill $\diamond$

\bigskip

We are now ready to define the priority ranking $R$ that will be used in the definition of the threshold mechanism.

\bigskip
\noindent \textbf{Step 7:} 
A definition of a priority ranking $R : I \to \{1,\ldots,n\}$.

Let $I_{\infty} = \{j \in I: t_j = \infty\}$ be the set of agents who never win.
To these agents, we arbitrarily assign ranking between $n-|I_{\infty}|+1,\ldots,n$.
That is, 
$R(I_\infty) = \{n-|I_{\infty}|+1,\ldots,n\}$.

If $I_\infty = I$, we are done with the definition of $R$.
Assume then that $I_{\infty} \neq I$,
and suppose by induction that we already defined the $k$ agents who have the highest priority, for some $k=0,1,\ldots,n-|I_{\infty}|-1$;
that is, we already defined $R^{-1}(1), R^{-1}(2),\ldots, R^{-1}(k)$.

Denote by $B_k$ the set of all bid profiles $b$ that satisfy the following properties:
\begin{itemize}
\item $b_i < t_i$ for each $i$ such that $R(i) \leq k$:
the bids of agents with high rank is low.
\item $b_i \geq t_i$ for each $i$ such that $i \not\in I_\infty \cup \{R^{-1}(1), R^{-1}(2),\ldots, R^{-1}(k)\}$,
with at least one strict inequality.
That is, agents who were not ranked yet have high bids.
\end{itemize}

By Step~6,
for every $b,b' \in B_k$ we have $A(b) = A(b')$.
By Step~5, it cannot be that $A(b) = 0$ for each $b \in B_k$.
Since $b_i < t_i$ for every $i$ such that $R(i) \leq k$,
we have $A(b) \not\in \{R^{-1}(1), R^{-1}(2),\ldots, R^{-1}(k)\}$ for each $b \in B_k$.
It follows that there is an agent $i_{k+1} \not\in I_\infty \cup \{R^{-1}(1), R^{-1}(2),\ldots, R^{-1}(k)\}$ such that $A(b) = i_{k+1}$ for every $b \in B_k$.
We let $i_{k+1}$ be the next agent according to $R$,
so that $R(i_{k+1}) = k+1$. 

\bigskip
\noindent\textbf{Step 8:} 
$M$ is a threshold mechanism with the priority ranking function $R$ defined in Step~7 and the thresholds $(t_i)_{i \in I}$ given by Step~2.

We will show that there is a threshold mechanism $M^* = (A^*,P^*)$ with the priority ranking function $R$ and the thresholds $(t_i)_{i \in I}$
such that for every bid profile $b \in V^n$,
$A(b)$ and $P(b)$ coincide with the agent and payments indicated by $M^*$.

By Steps~1 and~2,
if $A(b) = i$ then $P^*(b) = t_i = P(b)$,
while by Lemma~\ref{lemma:simple properties}(\ref{lem:ZeroPayment}), if $A(b) \neq i$, then $P^*(b) = 0 = P(b)$.
Hence the payment function in $M$ coincides
with the payment function of any threshold mechanism with the priority ranking function $R$ and the thresholds $(t_i)_{i \in I}$.
We turn to handle the allocation function.

If $I_b = \emptyset$, 
that is, every agent $i$ bids below $t_i$,
then the definition of $(t_i)_{i \in I}$ implies
that $A(b) = 0$,
Moreover, any threshold mechanism $M^* = (A^*,P^*)$ with the thresholds $(t_i)_{i \in I}$ will satisfy in this case $A^*(b) = 0$.

Assume from now on that $I_b \neq \emptyset$,
and denote by $i^* \in I_b$ the agent with minimal ranking in $I_b$ according to $R$:
\[ R(i^*) < R(i), \ \ \ \forall i \in I_b. \]
Then, for any threshold mechanism with the priority ranking function $R$ and the thresholds $(t_i)_{i \in I}$,
\begin{itemize}
\item If there is $j \in I_b$ such that $b_j > t_j$, 
then $A^*(b) = i^*$.
\item Otherwise, $A^*(b) \in I_b \cup \{0\}$.
\end{itemize}
We will show that in this case $A(b)$ coincides with the above specifications.
By the definition of $I_\infty$ we have $A(b) \not\in I_\infty$,
and since $M$ is IR we have $A(b) \in I_b \cup \{0\}$.
Hence if $b_j = t_j$ for every $j \in I_b$, we have $A(b)  \in I_b \cup \{0\}$.
It therefore remains to handle the case that $b_j > t_j$ for at least one agent $j \in I_b$.

Assume then that this is the case.
By Step $5$, $A(b) \neq 0$.
Denote $i = A(b)$.
We need to show that $i = i^*$.
By Step~1 and since $M$ is IR, we have $b_i \ge t_i$, so that $i \in I_b$.
Therefore, if $i \neq i^*$, then $R(i^*) < R(i)$.

Because $R$ is a well-defined priority ranking, there is
a bid profile $b^*$ such that 
(i) $A(b^*)=i^*$, and 
(ii) $b^*_j\geq t_j$ if and only if $R(j)\geq R(i^*)$ and $j \not\in I_\infty$.

If $I_b = I_{b^*}$, then by Step $6$ we have $i=A(b)=A(b^*)=i^*$, as needed.
Otherwise, $I_b \subsetneq I_{b^*}$. 
Consider the bid profile $c$ where $c_j = b_j$ for every $j\in I_{b^*} \setminus I_b$, and $c_j = b^*_j$ otherwise.
The bid profile $c$ is generated from $b^*$ by reducing the bid of every agent $j\in I_{b^*} \setminus I_b $ from $b^*_j$ to $b_j$.
By Step $3$, $A(c)=A(b^*)=i^*$, which implies that $I_c=I_b$ by construction.
Thus, by Step $6$ we have $i^* = A(c)= A(b)= i$, as needed.
\hfill
\end{proof}

\subsection{A supporting result for at least 3 agents}

The next result implies that for IR and SIC mechanisms, when there are at least three agents $i$, $j$, and $k$, and agent $i$ increases her bid while all other agents keep their bids fixed, it is impossible that the winner changes from $j$ to $k$.

\begin{proposition}\label{prop:OnlyOne}
Suppose the mechanism $M$ is \emph{IR} and \emph{SIC} with $n \ge 3$.
Then, there are no three distinct agents $i,j,k$ and bid profiles $b$, $b' = (b'_i,b_{-i})$, and $b'' = (b''_i,b_{-i})$, such that $b''_i > b'_i > b_i$, 
$u_j(b;b_j) > 0$, $u_k(b';b_k) > 0$, and $u_k(b'';b_k) > 0$.
\end{proposition}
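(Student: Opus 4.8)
The plan is a proof by contradiction. Suppose such $i,j,k$ and $b,b',b''$ exist. Since $M$ is IR, a non‑winner is paid $0$ (Lemma~\ref{lemma:simple properties}(\ref{lem:ZeroPayment})), so the hypotheses force $A(b)=j$, $A(b')=k$ and $A(b'')=k$. Fix $b_{-i}$, put $g(x):=A(x,b_{-i})$ and $T:=\inf\{x\ge 0: g(x)=i\}\in[0,\infty]$; by Lemma~\ref{lemma:simple properties}(\ref{lem:MonoForSingleItem}), $g(x)=i$ for all $x>T$, and since $g(b''_i)=k\ne i$ we have $b_i<b'_i<b''_i\le T$. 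The first step is a \emph{key lemma}: for every $x\in[0,T)$, $g(x)$ is an agent $\ne i$ whose utility at the truthful profile $(x,b_{-i})$ is strictly positive, and this utility depends only on the identity of $g(x)$. The second clause follows by perturbing agent~$i$'s own bid: if $g(x)=g(x')=\ell$ but $\ell$'s utility is strictly larger at $(x,b_{-i})$ than at $(x',b_{-i})$, then at the valuation profile $(x,b_{-i})$ agent~$i$ may switch to the bid $x'$; she is a non‑winner at both profiles so her payoff stays $0$, every agent other than $\ell$ has payoff $0$ at both profiles, and $\ell$'s payoff strictly decreases -- a deviation forbidden by the SNE condition, contradicting SIC. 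For the first clause, use the profile $b'$: if for some $x<T$ there were no winner at $(x,b_{-i})$, or the winner had payoff $0$, then at $b'=(b'_i,b_{-i})$ agent~$i$'s switch to the bid $x$ keeps her own payoff at $0$, keeps every agent other than $k$ at $0$, and strictly decreases $k$'s payoff $u_k(b';b_k)>0$ -- again forbidden. In particular $g(b_i)=j$ with a constant positive surplus $\nu:=u_j(b;b_j)$, and $g(b'_i)=g(b''_i)=k$ with a constant positive surplus $\mu$.

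Next we ``neutralise'' agent~$j$. By the standard IR--IC argument, agent~$j$'s winning payment against $(b_i,b_{-\{i,j\}})$ equals her threshold there, which must be $q:=P_j(b)=b_j-\nu<b_j$. Consider the valuation profile $v^\star:=(b_i,q,b_{-\{i,j\}})$. If agent~$j$ wins at $v^\star$ she pays $q$ and gets payoff $0$, so every agent has payoff $0$ at $v^\star$. Applying the ``there is a winner with positive surplus'' part of the key lemma with agent~$j$'s bid lowered from $b_j$ to $q$ (taking agent~$j$ as deviator at $b'$, where she is a non‑winner and remains one by Lemma~\ref{lemma:simple properties}(\ref{lem:MonoForSingleItem})), the winner at the valuation profile $(b'_i,q,b_{-\{i,j\}})$ is $k$, or some agent $\ne j,k$, or agent~$i$; in the first two cases it has strictly positive utility, and the ``no winner'' outcome is excluded. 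If that winner $\ell'$ is \emph{not} agent~$i$, then agent~$i$ also fails to win at $v^\star$ (monotonicity, $b_i<b'_i$), so at the valuation profile $(b'_i,q,b_{-\{i,j\}})$ agent~$i$'s switch to the bid $b_i$ returns the allocation to $v^\star$: her own payoff stays $0$, agent~$j$'s payoff stays $0$ (precisely why we lowered her valuation to her threshold), every other agent stays at $0$, and $\ell'$'s positive payoff drops to $0$ -- again forbidden by the SNE condition, contradicting SIC.

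Two residual situations remain, and handling them is the bulk of the proof. First, the winner $\ell'$ at $(b'_i,q,b_{-\{i,j\}})$ may be agent~$i$, or agent~$j$ may fail to win at $v^\star$ with agent~$i$ winning there instead; in either case agent~$i$'s threshold against the lowered opposing profile is some finite $\tau$, and she pays exactly $\tau$ whenever she wins. Evaluating at the valuation profile in which agent~$i$'s valuation equals $\tau$, agent~$i$'s payoff is $0$; and since $\tau\le b'_i<T$, the key lemma applied against the \emph{un‑lowered} profile $b_{-i}$ guarantees a genuine positive‑utility winner $\ell_0=g(\tau)$ at agent~$i$'s bid $\tau$. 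One then produces a deviation forbidden by the SNE condition -- agent~$j$ moving her bid between $b_j$ and $q$, or agent~$i$ moving her own bid just across $\tau$ -- that strictly harms $\ell_0$ while keeping agent~$i$, and everyone else, at payoff $0$. Second, if the winner at $v^\star$ is a third agent with strictly positive utility, one iterates the neutralisation, lowering in turn the valuation of any agent who would otherwise be helped; since there are finitely many agents this terminates, leaving the allocation at agent~$i$'s bid $b_i$ either empty, at a zero‑payoff agent, or at agent~$i$, which feeds back into the cases already treated. I expect the main obstacle to be exactly this third step: ensuring each candidate deviation does not inadvertently \emph{raise} some third agent's payoff (which would make it fail the SNE spite condition), and in particular the sub‑case $\ell_0=j$, where lowering agent~$j$'s bid both unseats her and re‑seats agent~$i$, so that the positive‑utility winner forced by the key lemma is agent~$j$ herself and her ``un‑lowering'' deviation benefits her rather than spiting anyone; resolving this requires combining the neutralisation with monotonicity of agent~$i$'s threshold along a suitable single‑perturbation path.
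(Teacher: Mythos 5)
Your opening moves are sound: the reduction to $A(b)=j$, $A(b')=A(b'')=k$, the monotone threshold $T$, and your ``key lemma'' (every bid $x<T$ of agent~$i$ produces a winner $\neq i$ with strictly positive utility, constant per winner) are all correct, and they are proved by exactly the kind of spite deviations the paper uses. You have also correctly diagnosed the central difficulty: the obvious deviation of agent~$i$ from $b$ to $b'$ is not forbidden by the SNE condition because it \emph{helps} agent~$k$, so one must first arrange that the target profile has all utilities equal to zero. But the proof is not complete. Your ``neutralisation'' step only closes the argument in the sub-case where the profile $v^\star=(b_i,q,b_{-\{i,j\}})$ has all payoffs zero \emph{and} the winner $\ell'$ at $(b'_i,q,b_{-\{i,j\}})$ is not agent~$i$. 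The remaining cases --- agent~$i$ winning after $j$ is lowered, or a third agent winning at $v^\star$ with positive utility, forcing an iterated neutralisation --- are only gestured at, and you yourself flag that the candidate deviations there may raise a third agent's payoff and hence fail the spite condition. That is precisely the difficulty the whole proposition turns on, so leaving it unresolved is a genuine gap, not a routine verification.

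The paper closes the argument by neutralising the \emph{other} agent: it lowers the bid of $k$ (the agent who wins at both $b'$ and $b''$) to her threshold $t_k=T_k(b'_{-k})$, forming $c'=(t_k,b'_{-k})$. A single spite deviation ($k$ misreporting back to $b_k$) shows every agent's utility at $c'$ is zero; from this it deduces that $b'_i$ must equal agent~$i$'s threshold $t_i'=T_i(t_k,b_{-\{i,k\}})$ --- if $b'_i>t_i'$ agent~$i$ would win at $c'$ with positive utility, and if $b'_i<t_i'$ then $k$'s deviation from $b$ to $(t_k,b_{-k})$ spites $j$. Running the identical argument with $b''$ gives $b''_i=t_i'$ as well, contradicting $b''_i>b'_i$. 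The reason this choice works and yours stalls is that neutralising the \emph{double} winner $k$ lets the same threshold $t_i'$ pin down both $b'_i$ and $b''_i$, whereas neutralising $j$ discards the one piece of structure ($k$ winning twice) that produces the contradiction. I would recommend reorganising your argument around $c'$ rather than $v^\star$; your key lemma then becomes unnecessary scaffolding.
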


\begin{proof}

\noindent\textbf{Step 0:} Preparations.

Assume to the contrary that there are three distinct agents $i,j,k$ and bid profiles $b$, $b' = (b'_i,b_{-i})$, and $b'' = (b''_i,b_{-i})$, such that $b''_i > b'_i > b_i$,
$u_j(b;b_j) > 0$, $u_k(b';b_k) > 0$, and $u_k(b'';b_k) > 0$.

Since $u_j(b;b_j) > 0$, we have $A(b) = j$, and since $u_k(b';b_k) > 0$, we have $A(b') = k$. 
Let $t_k = T_k(b'_{-k})$, and consider the profiles $c = (t_k,b_{-k})$ and $c' = (t_k,b'_{-k})=(t_k,b'_i,b_{-\{i,k\}})$.
See Figure \ref{fig:transition between bid profile b b' c c'} for the relations between these profiles.

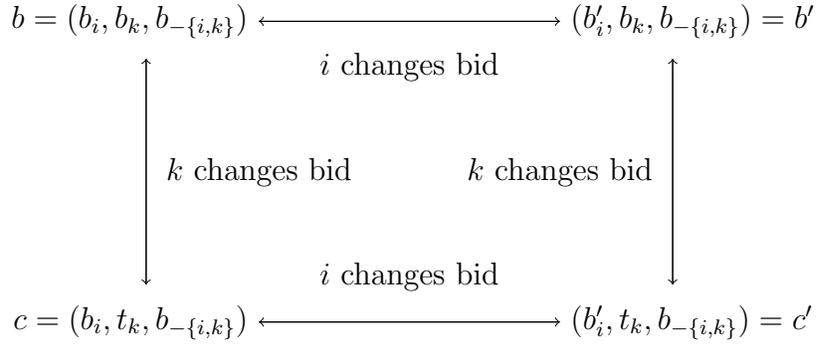
\begin{figure}
    \centering
\begin{tikzpicture}
[cross/.style={path picture={
\draw[black]
(path picture bounding box.south east) -- (path picture bounding box.north west) (path picture bounding box.south west) -- (path picture bounding box.north east);
}}]
\tikzstyle{dot}=[circle,draw,inner sep=1.5,fill=black]
\draw[->] (0,0) -- (4, 0) node[right] {$(b'_i,t_k,b_{-\{i,k\}}) = c'$};
\draw[->] (4,0) -- (0, 0) node[left] {$c = (b_i,t_k,b_{-\{i,k\}})$};
\draw[->] (0,4) -- (4, 4) node[right] {$(b'_i,b_k,b_{-\{i,k\}}) = b'$};
\draw[->] (4,4) -- (0, 4) node[left] {$b = (b_i,b_k,b_{-\{i,k\}})$};

\draw[->] (-1.5,0.5) -- (-1.5, 3.5);
\draw[->] (-1.5,3.5) -- (-1.5, 0.5);
\draw[->] (5.5,0.5) -- (5.5, 3.5);
\draw[->] (5.5,3.5) -- (5.5, 0.5);

\draw (2,0.6) node{$i$ changes bid};
\draw (2,3.4) node{$i$ changes bid};
\draw (0,2) node{$k$ changes bid};
\draw (4,2) node{$k$ changes bid};

\end{tikzpicture}

\caption{The relations between bid profiles $b$, $b'$, $c$, and $c'$.}
    \label{fig:transition between bid profile b b' c c'}
\end{figure}

Denote $t'_i = T_i(c'_{-i})$. 

\bigskip\noindent\textbf{Step 1:}  $u_m(c';c_m') = 0$ for all agents $m$.

\begin{itemize}
    \item If $A(c') = 0$, then by IR the utilities of all agents is 0: $u_m(c') = 0$ for all agents $m$. 
    \item If $A(c') = k$, then Lemma~\ref{lemma:simple properties}(\ref{lem:equalPayments}) implies that the payment of agent $k$ is $T_k(b'_i, b_{-\{i,k\}}) = T_k(b'_{-k}) = t_k$, and hence $u_k(c';c'_k) = 0$.
    \item If $A(c') = l \neq k$ and $u_l(c';c'_l) > 0$, then at $c'$ agent $k$ can misreport $b_k$ (instead of $t_k$), and so the bid vector after the deviation is $b'$. Since $u_m(b';b'_m) = 0 = u_m(c';c'_m)$ for every $m \neq l$, and $u_l(c';c'_l) > 0 = u_l(b';b'_l)$, it follows that the mechanism is not SIC at $b'$, a contradiction. Hence, at the profile $c'$, either the item is unassigned, or the winner of the item gets 0 utility. 
\end{itemize}
In all cases above, $u_m(c';c_m') = 0$ for all agents $m$.

\bigskip\noindent\textbf{Step 2:} $t'_i = b'_i$.

Assume to the contrary that $t'_i \neq b'_i$.
If $b'_i > t'_i$, then Lemma~\ref{lemma:simple properties}(\ref{lem:MonoForSingleItem}) implies that $A(c') = i$, and hence $u_i(c';b_i') = b'_i - t'_i > 0$, which contradicts Step 1.
Hence $b'_i < t'_i$, and therefore by the definition of $t'_i$ we have $A(c') \neq i$.
Since $b_i < b'_i$, by Lemma \ref{lemma:simple properties}(\ref{lem:MonoForSingleItem}), we have $A(c) \neq i$.

We next argue that either $A(c) = 0$, or the winner $A(c)$ has utility 0 at $c$.
Indeed, suppose that $l = A(c) \neq 0$ and $u_l(c;c_l) > 0$.
We will derive a contradiction to the assumption that $M$ is SIC.
Note that by the previous paragraph, $l \neq i$, and hence $c_l = c'_l$.
At bid profile $c$, agent $i$ can misreport from $b_i$ to $b'_i$, changing the bid profile to $c'$. 
Since $u_{l}(c;c_l) > 0 = u_{l}(c';c'_l)$ and $u_m(c';c_l) = 0 = u_m(c';c_l)$ for each $m \neq A(c)$,
we reach a contradiction to the SIC assumption.

We are now ready to derive the contradiction to the conclusion that $b'_i < t'_i$.
Suppose that at the bid profile $b$, agent $k$ misreports from $b_k$ to $t_k$, resulting in the profile $c$. 
The utility of agent $j$ strictly decreases, while the utility of every other agent remains the same, contradicting the assumption that $M$ is SIC.

\bigskip\noindent\textbf{Step 3:} The end of the proof.

The argument in Step~2 does not use the bid vector $b''$.
Hence, an analogous argument to that provided in Step~2 shows that $b''_i = t'_i$. 
But then $b'_i = t'_i = b''_i$, which contradicts the assumption that $b''_i > b'_i$. 
\end{proof}

\subsection{Every IR and SIC mechanism is ESIC}

In this section we prove the last part of the proof of Theorem~\ref{thm:charecterization}.

\begin{theorem}
\label{thm:SIC}
Every \emph{IR} and \emph{SIC} mechanism is \emph{ESIC}.
\end{theorem}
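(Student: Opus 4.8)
The plan is to show the contrapositive via the characterization already established: if $M$ is IR and SIC but \emph{not} ESIC, we derive a contradiction. Since we have not yet proved that IR and SIC implies threshold (that is the very thing Theorem~\ref{thm:charecterization} asserts, and Theorem~\ref{thm:SIC} is a step toward it), I cannot invoke Theorem~\ref{thm:ESIC-Char} directly. Instead I would work with the ``violation'' directly: assume there is a profile $v$, an agent $i$, and a deviation $b'_i$ such that at $b = v$ the truthful profile is an SNE (so $b$ maximizes $i$'s payoff and no deviation weakly harms \emph{all} others), yet $u_i(b';v_i) = u_i(b;v_i)$ while $u_k(b';v_k) < u_k(b;v_k)$ for \emph{some} agent $k \neq i$ — but, because $b$ is an SNE, this same deviation must strictly \emph{increase} the payoff of some other agent $l$. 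Note $l \neq i$ (agent $i$'s payoff is unchanged), $l \neq k$, and by Lemma~\ref{lemma:simple properties}(\ref{lem:ZeroPayment}) at most one agent has positive payoff in any profile; so before the deviation the unique possible positive-payoff agent is $k$ and after it is $l$, with all others (including $i$) at payoff $0$ throughout. In particular the deviation of $i$ changes the winner from $k$ to $l$.

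The key structural fact to exploit is that $u_i(b;v_i) = u_i(b';v_i) = 0$, i.e.\ agent $i$ does not win at either $b$ or $b'$, yet her bid change swings the winner from $k$ to $l$. First I would normalize: by Lemma~\ref{lemma:simple properties}(\ref{lem:MonoForSingleItem}) and (\ref{lem:equalPayments}) I can assume (increasing $b_k = v_k$ and $b'_l$ as needed, and perturbing $i$'s two bids slightly while preserving which of $k,l$ wins) that $u_k(b;v_k) > 0$ strictly and $u_l(b';v_l) > 0$ strictly. Now invoke Proposition~\ref{prop:OnlyOne}: when $n \ge 3$, for IR and SIC mechanisms there do not exist three distinct agents $i, j, k$ and a monotone chain of $i$-deviations $b_i < b'_i < b''_i$ with $j$ winning (with positive payoff) at the low bid and $k$ winning (with positive payoff) at the two higher bids — and, applied symmetrically, also no chain realizing ``$j$ wins, then $k$ wins, then $j$ wins,'' etc. The content of Proposition~\ref{prop:OnlyOne} is precisely that a single agent's bid movement cannot relay the prize between two \emph{other} distinct agents. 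So if $b'_i > v_i$ (WLOG by relabeling/monotonicity one of the two orderings holds; the other case $b'_i < v_i$ is symmetric using that lowering $i$'s bid can only move the winner in a monotone fashion too), then combining the transition $b \to b'$ (winner goes $k \to l$) with a further small push of $i$'s bid in the same direction and Lemma~\ref{lemma:simple properties}(\ref{lem:MonoForSingleItem}) produces exactly the forbidden three-agent chain of Proposition~\ref{prop:OnlyOne}, a contradiction. For $n = 2$ there is no such $k \neq i, l \neq i$ with $k \neq l$, so the premise is vacuous and ESIC coincides with SIC (as already noted in the paper).

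The main obstacle I anticipate is the bookkeeping needed to force the \emph{strict} positivity of payoffs and the \emph{monotone} chain required to apply Proposition~\ref{prop:OnlyOne}: a priori the SNE-violating deviation $b'_i$ could be either above or below $v_i$, and Proposition~\ref{prop:OnlyOne} is stated for an increasing chain $b_i < b'_i < b''_i$. I would handle this by first using IR plus Lemma~\ref{lemma:simple properties}(\ref{lem:MonoForSingleItem}) to argue that along the relevant one-dimensional family of bids for agent $i$, the set of bids at which $i$ herself wins is an up-set, so on the complement (where $i$ loses and the winner is some other agent with the unique positive payoff) the identity of the winner changes at most finitely often on any bounded segment — and then, by choosing $b''_i$ just beyond $b'_i$ on the side away from $v_i$ (or a third intermediate point if the direction is reversed), extract the precise three-agent configuration Proposition~\ref{prop:OnlyOne} rules out. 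Once that configuration is in hand the contradiction is immediate, completing the proof that every IR and SIC mechanism is ESIC, and hence (combined with Lemma~\ref{lemma:IR} and Theorem~\ref{thm:ESIC-Char}) finishing Theorem~\ref{thm:charecterization}.
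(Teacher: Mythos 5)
Your overall strategy---extract the winner swap caused by agent~$i$'s utility-preserving deviation and feed it into Proposition~\ref{prop:OnlyOne}---is the same as the paper's, and your opening reductions are correct: since $M$ is SIC the violating deviation must strictly help some third agent $l$, and by Lemma~\ref{lemma:simple properties}(\ref{lem:ZeroPayment}) it moves the item from $k$ to $l$. The gap is in how you manufacture the three-bid chain. You propose to take $b''_i$ ``just beyond $b'_i$'' and claim the winner there is still $l$; nothing guarantees this. Worse, the construction fails outright when $b'_i = T_i(b_{-i})$ (in the notation of the proof of Theorem~\ref{thm:ESIC-Char}): every bid strictly above $b'_i$ makes agent~$i$ herself the winner, so there do not exist two bids above $b_i$ sharing a non-$i$ winner, and Proposition~\ref{prop:OnlyOne} yields no contradiction at all. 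This boundary case is not degenerate---it is precisely the configuration that survives the Proposition~\ref{prop:OnlyOne} argument---and it requires a genuinely different idea, which the paper supplies: at the valuation profile $(b'_i,b_{-i})$, agent~$i$ can overbid, win the item, and pay $T_i(b_{-i})=b'_i$, leaving her own utility at zero while wiping out agent $l$'s positive surplus, contradicting SIC at \emph{that} valuation profile. Your proposal contains no substitute for this step.

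There are also two smaller holes in the non-boundary case $b'_i < T_i(b_{-i})$. First, your assertion that the winner ``changes at most finitely often'' on the segment is unjustified as stated; the paper avoids it by a pigeonhole over the uncountable interval $\bigl(b'_i, T_i(b_{-i})\bigr)$, which produces two bids $b''_i<b'''_i$ with a common winner $m$, and then applies Proposition~\ref{prop:OnlyOne} to whichever of the chains $b_i<b''_i<b'''_i$ (if $m\neq k$) or $b'_i<b''_i<b'''_i$ (if $m\neq l$) is forbidden. Second, for Proposition~\ref{prop:OnlyOne} to apply you need every intermediate bid to produce a winner with \emph{strictly positive} utility; this requires first ruling out intermediate bids at which all utilities vanish---such a bid would itself be a spiteful deviation from the truthful profile $b$, zeroing out agent $k$'s surplus---a step your proposal omits. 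With these three repairs your argument becomes essentially the paper's proof.
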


\begin{proof}
If $n=2$, then SIC and ESIC mechanisms coincide by definition.
Assume then that $n\geq 3$, and suppose by contradiction that there exists an IR and SIC mechanism $M$ which is not a threshold mechanism.
By Theorem \ref{thm:ESIC-Char}, $M$ is not ESIC.
Hence, there exists a bid profile $b = (b_1,\ldots,b_n)$, an agent $i$, 
and a bid $b'_i \neq b_i$ such that either 
(i) $u_i((b'_i,b_{-i});b_i) > u_i(b;b_i)$ or 
(ii) $u_i((b'_i,b_{-i});b_i) = u_i(b;b_i)$ and there exists an agent $j$ such that $u_j((b'_i,b_{-i});b_j) < u_j(b;b_j)$.
Since $M$ is SIC, (i) cannot hold,
and hence (ii) holds.
Moreover, since $M$ is SIC, there exists an agent $k \neq i,j$ such that $u_k((b'_i,b_{-i});b_k) > u_k(b;b_k)$.
Without loss of generality, assume that $b_i < b'_i$.

Let $b' = (b'_i,b_{-i})$.
Since $M$ is IR, we have $u_j(b;b_j) > u_j((b'_i,b_{-i});b_j) \ge 0$. 
This implies that $A(b) = j$.
Similarly, $u_k((b'_i,b_{-i});b_k) > u_k(b;b_k) \ge 0$, which implies that $A(b'_i,b_i) = k$.

If there exists a bid $b_i''$ such that $u_l((b''_i,b_{-i}); b_l) = 0$ for all agents $l$, 
then $(b''_i,b_{-i})$ is a deviation from $(b'_i,b_{-i})$ that contradicts SIC.
Thus, for every $b_i'' \notin \{ b_i,b_i'\}$, either $A(b''_i,b_{-i}) = i$ or the winner $A(b''_i,b_{-i})$ has a strictly positive utility.

Let $t_i = T_i(b_{-i})$. 
By definition and Lemma~\ref{lemma:simple properties}(\ref{lem:MonoForSingleItem}), for any $b''_i > t_i$ we have $A(b''_i,b_{-i}) = i$, and for any $b''_i < t_i$ we have $A(b''_i,b_{-i}) \neq i$.
Hence, $b_i < b_i' \leq t_i$.
If $b_i' < t_i$, then $(b'_i,t_i)$ is an open (non-empty) interval, and there exist two distinct bids  $b''_i, b'''_i \in (b'_i,t_i)$ that yield the same winner: $A(b''_i,b_{-i}) = A(b'''_i,b_{-i})$. 
Because this winner has a strictly positive utility, we get a contradiction to Proposition \ref{prop:OnlyOne}, so we conclude that $b_i < b'_i = t_i$.

Since $t_i = b'_i$ and $u_k((b'_i,b_{-i});b_k) > 0$,  
at the profile $(b'_i,b_{-i})$ agent $i$ can misreport a bid higher than $b'_i$, so that in the new profile agent $i$ wins the item and pays $t_i$ (as in Step $1$ of Theorem \ref{thm:ESIC-Char}), so her utility remain zero. 
In this case, the utility of agent $k$ strictly decreases, and the utility of every other agent remains $0$, contradicting the SNE assumption.
\hfill 
\end{proof}

\section{The optimal spite free mechanism in Section~\ref{Section: optimal spite free mechanism - example}} \label{Appendix: optimal spite free mechanism - example}

We here calculate the optimal threshold mechanism for the case where the agents are symmetric, 
and their private values are i.i.d.~and uniformly distributed on $[0,1]$.

The expected revenue is given by
\[ \gamma(t_1,\dots,t_n) = 
(1-t_n)t_n + t_n\Bigl( (1-t_{n-1})t_{n-1} + t_{n-1}\bigl( (1-t_{n-2})t_{n-2} + t_{n-2}(\dots (1-t_1)t_1\bigr)\Bigr).\]
Indeed, with probability $1-t_n$ the private value of the highest-ranked agent, agent $n$,
exceeds her threshold $t_n$, in which case the seller's revenue is $t_n$;
with probability $t_n$ the private value of agent $n$ is below $t_n$, and then by induction the expected revenue is given by the term that multiplies $t_n$ in the second summand.

The term $t_1$ appears only in the last term, as $(1-t_1)t_1$,
and hence the optimal threshold for agent~1 is
$t_1^* = \frac{1}{2}$.
The term $t_2$ appears only in the event that agents $n,n-1,\dots,3$ did not win the item. 
In that case, it appears as
\[ (1-t_2)t_2 + t_2(1-t^*_1)t^*_1  = t_2\bigl(1+(t^*_1)^2 - t_2\bigr),\]
where the equality holds because $t^*_1 = 1-t^*_1 = \frac{1}{2}$.
The roots of this function are $t_2 = 0$ and $t_2 = 1+(t^*_2)^2$,
hence
\begin{equation}
\label{equ:41}
t_2^* = \frac{1+(1-t^*_1)t^*_1}{2} = \frac{1+(t^*_1)^2}{2}.
\end{equation} 
The term $t_3$ appears only in the event that agents $n,n-1,\dots,4$ did not win the item. 
In that case, it appears as
\begin{align*}
(1-t_3)t_3 + t_3 \bigl( (1-t^*_2)t^*_2 + t^*_2(1-t^*_1)t^*_1 \bigr)
&= t_3 \bigl( 1 - t_3 + (1-t^*_2)t^*_2 + t^*_2(1-t^*_1)t^*_1 \bigr).
\end{align*} 
This function is quadratic in $t_3$,
one of its root is $0$,
and hence its maximum is attained at half the second root, which, by Eq.~\eqref{equ:41}, is
\begin{align*}
t^*_3 = \frac{1 + (1-t^*_2)t^*_2 + t^*_2(1-t^*_1)t^*_1}{2}= \frac{1-(t^*_2)^2}{2} + (t^*_2)^2
= \frac{1+(t^*_2)^2}{2}.
\end{align*} 
Continuing recursively in an analogous manner, we obtain the recursive relation 
\[ t^*_{i+1} = \frac{1+(t^*_{i})^2}{2}, \ \ \ \forall i = \{1,2,\dots,n-1\}. \]

\section{Proof of Theorem \ref{Theorem - sequential mechanism}} \label{Appendix - proof of Theorem for sequential}

Let $M$ be a sequential mechanism with thresholds $(t_i)_{i\in I}$.
Let $v=(v_i)_{i\in I}$ be the true valuations.
The marginal contribution of the $q$'th item to agent~$i$'s valuation is
$\Delta_i(q)=v_i(q)-v_i(q-1)$.
By convention, $\Delta_i(K+1) = -1$.
Hence,
$v_i(q)=\sum_{\ell=1}^q\Delta_i(\ell)$,
and since agent~$i$'s valuation is submodular,
$\Delta_i(1)\ge \Delta_i(2)\ge\cdots\ge \Delta_i(K)$.

Denote by $q_i$ the quantity that maximizes agent~$i$'s utility, so that $\Delta_i(q_i) \geq t_i > \Delta_i(q_i+1)$.
Assume w.l.o.g.~that the priority ranking is the identity function, so that agent~$1$ is the highest ranked.
The mechanism $M$ allocates $Q_1 = \min\{ K, q_1\}$ to agent~$1$, $Q_2 = \min\{K-Q_1,q_2\}$ to agent~$2$, and, more generally, $Q_i = \min\{K-\sum_{j=1}^{i-1}Q_j,q_i\}$ to each agent~$i$.

Since the number of items allocated to each agent~$i$ is at most $q_i$,
and since agent~$i$'s valuation is submodular,
$M$ is IR.
Since in addition agent~$i$'s report does not affect the allocation of agents $1,2,\dots,i-1$, $M$ is IC.

To see that $M$ is SIC, note that for each $i$, the utility of each agent $j \geq i$ is non-decreasing in the amount of items not allocated for the first $i$ agents, $\widehat K_i = K-\sum_{l=1}^{i-1}Q_l$.
Indeed, if $\widehat K_i = 0$, then each agent $j \geq i$ is not allocated any item, and hence, since $M$ is IR, increasing the amount of items allocated to these agents cannot harm them.
Suppose then that $\widehat K_i > 0$.
The number of items $M$ allocated to agent~$i$ is  $Q_i = \min\{\widehat K_i,q_i\}$.
\begin{itemize}
\item
If $Q_i = q_i$,
agent~$i$ is allocated $q_i$ items also after increasing the supply.
Therefore, the excess supply is transferred to agents $i+1,i+2,\dots,n$. 
By induction, the utilities of these agents are non-decreasing in the supply.
\item 
If $Q_i = \widehat K_i$,
then no items are left for agents $i+1,i+2,\dots,n$. 
Therefore, when the supply increases agent~$i$ may obtain more items,
in which case her utility does not decrease.
And the following agents cannot lose, since $M$ is IR.
\end{itemize}

We can now conclude that $M$ is SIC.
Indeed, when agent~$i$ misreports her valuation,
the allocation of agents $1,2,\dots,i-1$ is not affected.
Recall that $Q_i = \min\{K-\sum_{j=1}^{i-1}Q_j,q_i\}$.
\begin{itemize}
\item
If $Q_i = q_i$,
then,
since agent~$i$'s valuation is submodular,
any misreport that increases the amount of items allocated to agent~$i$ lowers her utility.
And a misreport that lowers the amount of items allocated to agent~$i$ increases the amount of items left to agents $i+1,\dots,n$,
and hence, by the above discussion,
cannot lower their utilities.
\item
If $Q_i = K-\sum_{j=1}^{i-1}Q_j$,
then agent~$i$ is allocated all items that are not allocated to agents $1,\dots,i-1$.
In particular, all subsequent agents are allocated no item.
Since $M$ is IR,
any misreport cannot lower the utility of subsequent agents.
\end{itemize}

\section{Proof of Theorem \ref{thm:MultipleChar}} \label{Appendix - Proof of Theorem 2}

\noindent\textbf{Proof of Part 1.}
Assume by contradiction that $A_i(b)\neq T_i$ while
\[
b_i(T_i)-P_i^{T_i}(b_{-i}) > b_i(S_i)-P_i^{S_i}(b_{-i}),\quad\forall\,S_i\neq T_i.
\]
By definition of $W_{-i}^{T_i}$, 
there exists $b'_i$ with $A_i(b'_i,b_{-i})=T_i$.
By Lemma~\ref{lem:payment}, $P_i(b'_i,b_{-i})=P_i^{T_i}(b_{-i})$.
Hence,
\[
u_i\big((b'_i,b_{-i});b_i\big)=b_i(T_i)-P_i^{T_i}(b_{-i}) > b_i(S_i)-P_i^{S_i}(b_{-i}) \ge u_i\big((b_i,b_{-i});b_i\big),
\]
and therefore $b'_i$ is a profitable deviation at $b$, contradicting SIC. Thus $A_i(b_i,b_{-i})=T_i$.

\bigskip
\noindent\textbf{Proof of Part 2.}
Let $A_i(b)=T_i$ and assume, by contradiction, that there exists $S_i\neq T_i$ with
$b_i(S_i)-P_i^{S_i}(b_{-i}) > b_i(T_i)-P_i^{T_i}(b_{-i})$.
This inequality implies that $b_{-i} \in W_{-i}^{S_i}$, 
and hence there exists $b''_i$ with $A_i(b''_i,b_{-i})=S_i$.
By Lemma~\ref{lem:payment}, $P_i(b''_i,b_{-i})=P_i^{S_i}(b_{-i})$, hence
\[
u_i\big((b''_i,b_{-i});b_i\big)=b_i(S_i)-P_i^{S_i}(b_{-i}) > b_i(T_i)-P_i^{T_i}(b_{-i}) = u_i\big((b_i,b_{-i});b_i\big).
\]
Therefore, $b''_i$ is a profitable deviation at $b$, contradicting SIC.

\bigskip
\noindent\textbf{Proof of Part 3.}
Assume $I=\{i,j\}$,
so that $b_{-i}$ translates to $b_j$. 
Fix a bundle $T_i \subseteq S$.
We argue that if $b_j \neq b'_j$ and $A(b_i,b_j) = A(b_i,b'_j)$,
then $P(b_i,b_j) = P(b_i,b'_j)$.
Indeed, $P_j(b_i,b_j) = P_j(b_i,b'_j)$ by Lemma~\ref{lem:payment},
and $P_i(b_i,b_j) = P_i(b_i,b'_j)$ because otherwise agent~$j$ would have a deviation that does not change her payoff but harms agent~$i$, contradicting SIC: 
a deviation to $b'_j$ at $(b_i,b_j)$
(if $P_i(b_i,b_j) < P_i(b_i,b'_j)$)
or to $b_j$ at $(b_i,b'_j)$
(if $P_i(b_i,b_j) > P_i(b_i,b'_j)$).

Hence,
the number of distinct payments 
when agent~$j$ obtains $T_i$ is at most $2^{K-|T_i|}$,
the number of subsets of the complement of $T_i$. 

\end{appendices}

\end{document}